\theoremstyle{plain}
\newtheorem{theorem}{\textbf{Theorem}}[section]
\theoremstyle{plain}
\newtheorem{definition}{\textbf{Definition}}[section]
\theoremstyle{plain}
\newtheorem{lemma}{\textbf{Lemma}}[section]
\newcommand{\rev}[1]{{\color{blue}#1}} 
\newcommand{\com}[1]{\textbf{\color{red}(COMMENT: #1)}} 
\newcommand{\mcom}[1]{\textbf{\color{purple}(MahdiCom: #1)}} 
\newcommand{\clar}[1]{\textbf{\color{green}(NEED CLARIFICATION: #1)}}
\newcommand{\rev}[1]{#1}
\newcommand{\com}[1]{}
\newcommand{\mcom}[1]{}
\newcommand{\clar}[1]{}
\title{\LARGE \bf
Recycled ADMM: Improve Privacy and Accuracy with Less Computation in Distributed Algorithms}
\author{Xueru Zhang, Mohammad Mahdi Khalili, Mingyan Liu
\thanks{\rev{This work is supported by the NSF under grants CNS-1422211, CNS-1646019, and CNS-1739517. }}
\thanks{X. Zhang, M. Khalili and M. Liu are with the Dept. of Electrical Engineering and Computer Science, University of Michigan, Ann Arbor, MI 48105, \{xueru, khalili, mingyan\}@umich.edu}}
\begin{document}

\maketitle
\thispagestyle{empty}
\pagestyle{empty}

\begin{abstract}
Alternating direction method of multiplier (ADMM) is a powerful method to solve decentralized convex optimization problems. In distributed settings, each node performs computation with its local data and the local results are exchanged among neighboring nodes in an iterative fashion. During this iterative process the leakage of data privacy arises and can  accumulate significantly over many iterations, making it difficult to balance the privacy-utility tradeoff.  In this study we propose Recycled ADMM (R-ADMM), where a linear approximation is applied to every even iteration, its solution directly calculated using only results from the previous, odd iteration. It turns out that under such a scheme, half of the updates incur no privacy loss and require much less computation compared to the conventional ADMM.  We obtain a sufficient condition for the convergence of R-ADMM and provide the privacy analysis based on objective perturbation.
\end{abstract}

\section{Introduction}\label{sec:intro}
Distributed optimization and learning are crucial for many settings where the data is possessed by multiple parties or when the quantity of data prohibits processing at a central location\cite{extra1,extra2,extra3,extra4,extra5}. 
Many problems can be formulated as a convex optimization of the following form: $\min_\textbf{x}\sum_{i=1}^{N}f_i(\textbf{x})$.
In a distributed setting, each entity/node $i$ has its own local objective $f_i$, $N$ entities/nodes collaboratively work to solve this objective through an interactive process of local computation and message passing. At the end all local results should ideally converge to the global optimum.

The information exchanged over the iterative process gives rise to privacy concerns if the local training data contains sensitive information such as medical or financial records, web search history, and so on. It is therefore highly desirable to ensure such iterative processes are privacy-preserving. We adopt the $\varepsilon$-differential privacy to measure such privacy guarantee; it is generally achieved by perturbing the algorithm such that the probability distribution of its output is relatively insensitive to any change to a single record in the input \cite{Dwork2006}.

Existing approaches to decentralizing the above problem primarily consist of subgradient-based algorithms \cite{nedic2008,nedic2009,lobel2011,Gade} and ADMM-based algorithms \cite{xu2017adaptive,xu2016,zhang2017privacy,zhang2017admm,zhang2018distributed,wei2012,ling2014,shi2014,zhang2014,ling2016,Chunlei}. It has been shown that ADMM-based algorithms can converge at the rate of $O(\frac{1}{k})$ while subgradient-based algorithms typically converge at the rate of $O(\frac{1}{\sqrt{k}})$, where $k$ is the number of iterations \cite{wei2012}. In this study, we will solely focus on ADMM-based algorithms. While a number of differentially private (sub)gradient-based distributed algorithms have been proposed \cite{hale2015,huang2015,han2017,bellet2017}, the same is much harder for ADMM-based algorithms due to its computational complexity stemming from the fact that each node is required to solve an optimization problem in each iteration. To the best of our knowledge, only \cite{zhang2017,xueru,DP_ADMM} apply differential privacy to ADMM. In particular, \cite{zhang2017} proposed the dual/primal variable perturbation method to inspect the privacy loss of one node in every single iteration; this, however, is not sufficient for guaranteeing privacy as an adversary can potentially use the revealed results from all iterations to perform inference. In \cite{xueru} we address this issue by inspecting the total privacy loss over the entire process and the entire network; we proposed a penalty perturbation method which improves the privacy-utility tradeoff significantly. \rev{In the recent work \cite{DP_ADMM}, a first-order approximation is applied to the augmented Lagrangian in all iterations; however, this method requires a central server to average all updated primal variables over the network in each iteration.}

In this study we present Recycled ADMM (R-ADMM), a modified version of ADMM where the privacy leakage only happens during half of the updates. Specifically, we adopt a linearized approximated optimization in every even iteration, whose solution can actually be calculated directly from results in the previous, odd iteration, and is used for updating primal variable. We establish a sufficient condition for convergence and provide a privacy analysis using the objective perturbation method. Our numerical results show that the privacy-utility tradeoff can be improved significantly. 

The remainder of the paper is organized as follows. We present problem formulation and definition of differential privacy and ADMM in Section \ref{sec:pre} and the Recycled ADMM algorithm along with its convergence analysis in Section \ref{sec:radmm}.  A private version of this ADMM algorithm is then introduced in Section \ref{sec:pradmm} and numerical results in Section \ref{sec:numerical}. Section \ref{sec:conclusion} concludes the paper.

\section{Preliminaries }\label{sec:pre}
\subsection{Problem Formulation}
Consider a connected network\footnote{A connected network is one in which every node is reachable (via a path) from every other node.} given by an undirected graph $G(\mathscr{N},\mathscr{E})$, which consists of a set of nodes $\mathscr{N} = \{1,2,\cdots,N\}$ and a set of edges $\mathscr{E} = \{1,2,\cdots,E\}$. Two nodes can exchange information if and only if they are connected by an edge. Let $\mathscr{V}_i$ denote node $i$'s set of neighbors, excluding itself. A node $i$ has a dataset $D_i = \{(x_{i}^n,y_{i}^n) | n = 1,2,\cdots,B_i \}$, where $x_{i}^n \in \mathbb{R}^d$ is the feature vector representing the $n$-th sample belonging to $i$, $y_{i}^n \in \{-1,1\}$ the corresponding label, and $B_i$ the size of $D_i$. 

Consider the regularized empirical risk minimization (ERM) problem for binary classification defined as follows:
\begin{equation}\label{eq:prelimi_1}
\min_{f_c}O_{ERM}(f_{c},D_{all}) = \sum_{i=1}^{N}\dfrac{C}{B_i}\sum_{n=1}^{B_i} {\mathscr{L}}(y_{i}^n f_c^Tx_{i}^n ) + \rho R(f_c) 
\end{equation}
where $C \leq B_i$ and $\rho>0$ are constant parameters of the algorithm, the loss function $\mathscr{L}(\cdot)$ measures the accuracy of the classifier, and the regularizer $R(\cdot)$ helps prevent overfitting. The goal is to train a (centralized) classifier $f_c\in \mathbb{R}^d$ over the union of all local datasets $D_{all} = \cup_{i \in \mathscr{N}} D_i$ in a distributed manner using ADMM, while providing privacy guarantee for each data sample.
\subsection{Differential Privacy \cite{Dwork2006}}
A randomized algorithm $ \mathscr{A}(\cdot)$ taking a dataset as input satisfies $\varepsilon$-differential privacy if for any two datasets $D$, $\hat{D}$ differing in at most one data point, and for any set of possible outputs $S \subseteq \text{range}(\mathscr{A})$, $\text{Pr}(\mathscr{A}(D) \in S)\leq e^{\varepsilon}\text{Pr}(\mathscr{A}(\hat{D}) \in S)$ holds.
We call two datasets differing in at most one data point as neighboring datasets. The above definition suggests that for a sufficiently small $\varepsilon$, an adversary will observe almost the same output regardless of the presence (or value change) of any one individual in the dataset; this is what provides privacy protection for that individual. 
\subsection{Conventional ADMM}
To decentralize \eqref{eq:prelimi_1}, let $f_i$ be the local classifier of each node $i$. To achieve consensus, i.e., $f_1 = f_2 = \cdots = f_N$, a set of auxiliary variables $\{w_{ij} | i \in \mathscr{N}, j \in \mathscr{V}_i\}$ are introduced for every pair of connected nodes.  As a result, \eqref{eq:prelimi_1} is reformulated equivalently as: 
 \begin{equation}\label{eq:prelimi_2}
\begin{aligned}
& \min_{\{f_i\},\{w_{ij}\}} 
& &\tilde{O}_{ERM}(\{f_i\}_{i=1}^N,D_{all})  = \sum_{i=1}^{N}O(f_i,D_i) \\
&\text{   s.t.} 
& & f_i = w_{ij}, w_{ij} = f_j, \ \ \ i \in \mathscr{N}, j \in \mathscr{V}_i
\end{aligned}
\end{equation}
where $O(f_i,D_i) = \dfrac{C}{B_i}\sum_{n=1}^{B_i} {\mathscr{L}}(y_{i}^n f_i^Tx_{i}^n ) + \dfrac{\rho}{N} R(f_i)$. $\{f_i\}$ (resp. $\{w_{ij}\}$) is the shorthand for $\{f_i\}_{i\in \mathscr{N}}$ (resp. $\{w_{ij}\}_{i\in \mathscr{N}, j\in \mathscr{V}_i}$). Let $\{w_{ij},\lambda_{ij}^k\}$ be the shorthand for $\{w_{ij},\lambda_{ij}^k\}_{i \in \mathscr{N},j \in \mathscr{V}_i, k \in \{a,b\}}$, where $\lambda_{ij}^a$, $\lambda_{ij}^b$ are dual variables corresponding to equality constraints $f_i = w_{ij}$ and $w_{ij}=f_j$ respectively. The objective in \eqref{eq:prelimi_2} can be solved using ADMM with the augmented Lagrangian: 
\begin{eqnarray}\label{eq:prelimi_3}
L_\eta (\{f_i\},\{w_{ij},\lambda_{ij}^k\}) = \sum_{i=1}^NO(f_i,D_i)\nonumber\\ +  \sum_{i=1}^N\sum_{j \in \mathscr{V}_i}(\lambda_{ij}^a)^T(f_i-w_{ij})+\sum_{i=1}^N\sum_{j \in \mathscr{V}_i}(\lambda_{ij}^b)^T(w_{ij}-f_j)\\
+ \sum_{i=1}^N\sum_{j \in \mathscr{V}_i}\dfrac{\eta}{2} (||f_i-w_{ij}||_2^2+||w_{ij}-f_j||_2^2)~.\nonumber
\end{eqnarray}

In the $(t+1)$-th iteration, the ADMM updates consist of the following:
\begin{eqnarray}
f_i(t+1) = \underset{f_i}{\text{argmin}}\ L_\eta (\{f_i\},\{w_{ij}(t),\lambda_{ij}^k(t)\})~;\label{eq:prelimi_4}\\
w_{ij}(t+1) = \underset{w_{ij}}{\text{argmin}}\ L_\eta (\{f_i(t+1)\},\{w_{ij},\lambda_{ij}^k(t)\})\label{eq:prelimi_5}~;\\
\lambda_{ij}^a(t+1) = \lambda_{ij}^a(t) + \eta(f_i(t+1)-w_{ij}(t+1))~;\label{eq:prelimi_6}\\
\lambda_{ij}^b(t+1) = \lambda_{ij}^b(t) + \eta(w_{ij}(t+1)-f_j(t+1))~.\label{eq:prelimi_7}
\end{eqnarray}
Using Lemma 3 in \cite{forero2010}, {if dual variables $\lambda_{ij}^a(t)$ and $\lambda_{ij}^b(t)$ are initialized to zero for all node pairs $(i,j)$, then $\lambda_{ij}^a(t) = \lambda_{ij}^b(t)$ and $\lambda_{ij}^k(t) = -\lambda_{ji}^k(t)$ will hold for all iterations with $k \in \{a,b\}, i \in \mathscr{N}, j \in \mathscr{V}_i$.} Let $\lambda_{i}(t) = \sum_{j \in \mathscr{V}_i}\lambda_{ij}^a(t) = \sum_{j \in \mathscr{V}_i}\lambda_{ij}^b(t)$, then the ADMM iterations \eqref{eq:prelimi_4}-\eqref{eq:prelimi_7} can be simplified as (Refer to Appendix A in \cite{xueru} for proof):
\begin{eqnarray}
f_i(t+1) = \underset{f_i}{\text{argmin}} \{ O(f_i,D_i) + 2\lambda_i(t)^Tf_i \nonumber \\
+  \eta \sum_{j \in \mathscr{V}_i}||\dfrac{1}{2}(f_i(t)+f_j(t))-f_i||_2^2~ \}~; \label{eq:prelimi_8} \\ 
\lambda_{i}(t+1) = \lambda_{i}(t) +  \dfrac{\eta}{2}\sum_{j \in \mathscr{V}_i}(f_i(t+1)-f_j(t+1))~. \label{eq:prelimi_9}
\end{eqnarray}

 \subsection{Private ADMM \cite{zhang2017} \& Private M-ADMM \cite{xueru}}
 In private ADMM \cite{zhang2017}, the noise is added either to the updated primal variable before broadcasting to its neighbors (primal variable perturbation), or to the dual variable before updating its primal variable using \eqref{eq:prelimi_8} (dual variable perturbation). The privacy property is only evaluated for a single node and a single iteration, both methods cannot balance the privacy-utility tradeoff very well if consider the total privacy loss. In \cite{xueru} the total privacy loss of the whole network over the entire iterative process is considered. A modified ADMM (M-ADMM) was proposed to improve the privacy-utility tradeoff. Specifically, it explores the rule of step-size (penalty parameter) in stabilizing the algorithm. M-ADMM allows each node to independently determine its penalty parameter; by perturbing the algorithm with noise correlated to penalty parameter and at the same time increasing the penalty parameters, the privacy and accuracy can be improved simultaneously.    
 
\subsection{Main idea} 
Fundamentally, the accumulation of privacy loss over iterations stems from the fact that the raw data is used in every primal update. If the updates can be made without using the raw data, but only from computational results that already exist, then the privacy loss originating from these updates will be zero, while at the same time the computational cost be reduced significantly. Based on this idea, we start with modifying ADMM such that we can repeatedly use some computational results to make updates. 
 
\section{Recycled ADMM (R-ADMM)} \label{sec:radmm} 

\subsection{Making information recyclable} 
ADMM can outperform gradient-based methods in terms of requiring fewer number of iterations for convergence; this however comes at the price of high computational cost in every iteration. In particular, the primal variable is updated by performing an optimization in each iteration. In \cite{mokhtari2015,ling2014,ling2015dlm}, either a linear or  quadratic approximation of the objective function is used to obtain an inexact solution in each iteration in lieu of solving the original optimization problem. While this clearly lowers the computational cost, the approximate computation is performed using the local, raw data in every iteration, which means that privacy loss inevitably accumulates over the iterations. 

We begin by modifying ADMM in such a way that in every even iteration, without using the raw data, the primal variable is updated solely based on the existing computational results from the previous, odd iteration. Compared with conventional ADMM, these updates incur no privacy loss and less computation.  Since the computational results are repeatedly used, this method will be referred to as Recycled ADMM (R-ADMM). 

Specifically, in the $2k$-th (even) iteration, we approximate $O(f_i,D_i)$ (Eqn. \eqref{eq:prelimi_8}, primal update optimization) by $O(f_i,D_i)\approx O(f_i(2k-1),D_i) + \nabla O(f_i(2k-1),D_i)^T(f_i-f_i(2k-1)) + \frac{\gamma}{2}||f_i-f_i(2k-1)||_2^2$ $(\gamma \geq 0)$ and update only the primal variables. Using the first-order condition, the updates in the $2k$-th iteration become:
\begin{eqnarray}
f_i(2k)=f_i(2k-1) - \frac{1}{2\eta V_i+\gamma}\{\nabla O(f_i(2k-1),D_i) \nonumber \\ +2\lambda_i(2k-1)+\eta\sum_{j\in \mathscr{V}_i}(f_i(2k-1)-f_j(2k-1))\}
\label{eq:r-admm1}~;\\
\lambda_{i}(2k) = \lambda_{i}(2k-1)~. \label{eq:r-admm2}
\end{eqnarray}
In the $(2k-1)$-th (odd) iteration, the updates are kept the same as \eqref{eq:prelimi_8}\eqref{eq:prelimi_9}:
\begin{eqnarray}
f_i(2k-1) = \underset{f_i}{\text{argmin}} \{ O(f_i,D_i) + 2\lambda_i(2k-2)^Tf_i \nonumber \\
+  \eta \sum_{j \in \mathscr{V}_i}||\dfrac{1}{2}(f_i(2k-2)+f_j(2k-2))-f_i||_2^2~ \}~;  \label{eq:r-admm3}\\ 
\lambda_{i}(2k-1) = \lambda_{i}(2k-2) \nonumber\\+  \dfrac{\eta}{2}\sum_{j \in \mathscr{V}_i}(f_i(2k-1)-f_j(2k-1))~. \label{eq:r-admm4}
\end{eqnarray}
Note that in the $(2k)$-th (even) iteration, we need the gradient $\nabla O(f_i(2k-1),D_i)$ and primal difference $\dfrac{\eta}{2}\sum_{j \in \mathscr{V}_i}(f_i(2k-1)-f_j(2k-1))$ for the updates; these are available directly from the previous, $(2k-1)$-th (odd) iteration,
i.e., this information can be recycled.  In this sense, R-ADMM may be viewed as alternating between conventional ADMM (odd iterations) and a variant of gradient descent (even iterations), where $\frac{1}{2\eta {V}_i + \gamma}$ is the step-size with \rev{a slightly modified gradient term}. The procedure is shown in Algorithm \ref{A1}.
\begin{algorithm}\label{A1}
	\textbf{Input: }{$\{D_i\}_{i=1}^N$}
	
	\textbf{Initialize: }$\forall i$, generate $f_i(0)$ randomly, $\lambda_i(0) = \textbf{0}_{d \times 1}$ 
	
	\For{$k=1$ \KwTo$K$}{
		\For{$i=1$ \KwTo$\mathscr{N}$}{
			Update primal variable $f_i(2k-1)$ via \eqref{eq:r-admm3};
			
			Calculate the gradient $\nabla O(f_i(2k-1),D_i)$;
			
			Broadcast $f_i(2k-1)$ to all neighbors $j \in \mathscr{V}_i$.
		}	
		
		\For{$i=1$ \KwTo$\mathscr{N}$}{
			Calculate  $\dfrac{\eta}{2}\sum_{j \in \mathscr{V}_i}(f_i(2k-1)-f_j(2k-1))$;
			
			Update dual variable $\lambda_i(2k-1)$ via \eqref{eq:r-admm4}.
		}	
		\For{$i=1$ \KwTo$\mathscr{N}$}{
			Use the stored $\nabla O(f_i(2k-1),D_i)$ and $\dfrac{\eta}{2}\sum_{j \in \mathscr{V}_i}(f_i(2k-1)-f_j(2k-1))$ to update primal variable $f_i(2k)$ via \eqref{eq:r-admm1};
			
			Keep the dual variable $\lambda_i(2k)=\lambda_i(2k-1)$;
			
			Broadcast $f_i(2k)$ to all neighbors $j \in \mathscr{V}_i$.
		}	
	}
	\textbf{Output: }{primal $\{f_i(2K)\}_{i=1}^N$ and dual $\{\lambda_i(2K)\}_{i=1}^N$}	\caption{Recycled ADMM (R-ADMM)}
\end{algorithm}

\subsection{Convergence Analysis}
We next show that R-ADMM (Eqn. \eqref{eq:r-admm1}-\eqref{eq:r-admm4})  converges to the optimal solution under a set of common technical assumptions. 

\textbf{\textit{Assumption 1}:} Function $O(f_i,D_i)$ is convex and differentiable in $f_i$, $\forall i$.

\textbf{\textit{Assumption 2}:} The solution set to the original ERM problem \eqref{eq:prelimi_1} is nonempty and there exists at least one bounded element. 

\textbf{\textit{Assumption 3:}} For all $i \in \mathscr{N}$, $O(f_i,D_i)$ has Lipschitz continuous gradients, i.e., for any $f_i^1$ and ${f}_i^2$, we have: 
\begin{equation}\label{eq:assume1}
||\nabla O(f_i^1,D_i)-\nabla O(f_i^2,D_i)||_2 \leq M_i||f_i^1- f_i^2||_2
\end{equation}

By the KKT condition of the primal update \eqref{eq:r-admm3}:
\begin{eqnarray}\label{eq:c_1}
0 = \nabla O(f_i(2k-1),D_i) + 2\lambda_i(2k-2)\nonumber\\ + \eta\sum_{j \in \mathscr{V}_i}(2{f}_i(2k-1)-({f}_i(2k-2)+{f}_j(2k-2)))~.
\end{eqnarray} 

Define the adjacency matrix $A\in \mathbb{R}^{N \times N}$ as: 
$$a_{ij} = \begin{cases}
1, \ \ \text{ if node } i\text{ and node }j \text{ are connected } \\
0, \ \ \text{ otherwise }~. 
\end{cases}$$

Stack the variables $f_i(t)$, $\lambda_i(t)$ and $\nabla O(f_i(t),D_i)$ for $i \in \mathscr{N}$ into matrices, i.e.,
$$ \hat{f}(t) = \begin{bmatrix}
f_1(t)^T\\
f_2(t)^T\\
\vdots\\
f_N(t)^T
\end{bmatrix}\in \mathbb{R}^{N\times d} \text{ , \ \   }\Lambda(t) = \begin{bmatrix}
\lambda_1(t)^T\\
\lambda_2(t)^T\\
\vdots\\
\lambda_N(t)^T
\end{bmatrix}\in \mathbb{R}^{N\times d} $$ 
$$ \nabla \hat{O}(\hat{f}(t),D_{all}) = \begin{bmatrix}
\nabla O(f_1(t),D_1)^T\\
\nabla O(f_2(t),D_2)^T\\
\vdots\\
\nabla O(f_N(t),D_N)^T
\end{bmatrix}\in \mathbb{R}^{N\times d} $$

Let $V_i = | \mathscr{V}_i|$ be the number of neighbors of node $i$, and define the degree matrix $D = \textbf{diag}([V_1;V_2;\cdots;V_N])\in \mathbb{R}^{N \times N}$ and the diagonal matrix $\tilde{D}$ with $\tilde{D}_{ii} = 2\eta V_i + \gamma$. Then for each $k$, the matrix form of \eqref{eq:r-admm1}\eqref{eq:r-admm2}\eqref{eq:c_1}\eqref{eq:r-admm4} are:
\begin{eqnarray}
\hat{f}(2k) =\hat{f}(2k-1) - \tilde{D}^{-1}\{\nabla \hat{O}(\hat{f}(2k-1),D_{all})\nonumber \\+2\Lambda(2k-1)
+ \eta(D-A)\hat{f}(2k-1)\}~;\label{eq:matrix_1}\\
2\Lambda(2k) = 2\Lambda(2k-1)~;\label{eq:matrix_2}\\
\textbf{0}_{N\times d}=\nabla\hat{O}(\hat{f}(2k-1),D_{all}) + 2\Lambda(2k-2)\nonumber \\+2\eta D\hat{f}(2k-1)
-\eta (D+A)\hat{f}(2k-2)~;\label{eq:matrix_3}\\
2\Lambda(2k-1) = 2\Lambda(2k-2)+\eta(D-A)\hat{f}(2k-1)~.\label{eq:matrix_4}
\end{eqnarray}
Writing $\hat{f}(2k-2)$ and $\Lambda(2k-2)$ in \eqref{eq:matrix_3}\eqref{eq:matrix_4} as functions of $\hat{f}(2k-3)$, $\Lambda(2k-3)$ using \eqref{eq:matrix_1}\eqref{eq:matrix_2}, we obtain: 
\begin{eqnarray*}
\nabla\hat{O}(\hat{f}(2k-1),D_{all}) +\eta (D+A) \tilde{D}^{-1}\nabla \hat{O}(\hat{f}(2k-3),D_{all}) \nonumber \\+\eta(D+A)(\hat{f}(2k-1)-\hat{f}(2k-3))
\\+ \eta(D+A)\tilde{D}^{-1} \eta(D-A)\hat{f}(2k-3)\nonumber\\ +2\Lambda(2k-1)+\eta (D+A) \tilde{D}^{-1}2\Lambda(2k-3)=\textbf{0}_{N\times d}~;\nonumber \\
2\Lambda(2k-1) = 2\Lambda(2k-3)+\eta(D-A)\hat{f}(2k-1)\nonumber~.
\end{eqnarray*}	
The convergence of R-ADMM is proved by showing that the pair ($\hat{f}(2k-1)$, $\Lambda(2k-1)$) from odd iterations converges to the optimal solution. To simplify the notation, we will re-index every two consecutive odd iterations $2k-3$ and $2k-1$ using $t$ and $t+1$:
\begin{eqnarray}
\nabla\hat{O}(\hat{f}(t+1),D_{all}) +\eta (D+A) \tilde{D}^{-1}\nabla \hat{O}(\hat{f}(t),D_{all}) \nonumber \\+\eta(D+A)((\hat{f}(t+1)-\hat{f}(t))
+ \tilde{D}^{-1} \eta(D-A)\hat{f}(t))\nonumber\\ +2\Lambda(t+1)+\eta (D+A) \tilde{D}^{-1}2\Lambda(t)=\textbf{0}_{N\times d}~;\label{eq:c_2} \\
2\Lambda(t+1) = 2\Lambda(t)+\eta(D-A)\hat{f}(t+1)\label{eq:c_3}~.
\end{eqnarray}

Note that $D-A$ is the laplacian and $D+A$ is the signless Laplacian matrix of the network, with the following properties if the network is connected: {(i)} $D\pm A \succeq 0$ is positive semi-definite; {(ii)} $\text{Null}(D-A) = c\textbf{1}$, i.e., every member in the null space of $D-A$ is a scalar multiple of \textbf{1} with \textbf{1} being the vector of all $1$'s \cite{Jonathan2007}. 
\begin{lemma}\label{Lemma:1}
	[\textbf{First-order Optimality Condition} \cite{ling2016}] Under Assumptions 1 and 2, the following two statements are equivalent:
	\begin{itemize}
		\item $\hat{f}^* = [(f_1^*)^T;(f_2^*)^T;\cdots;(f_N^*)^T] \in \mathbb{R}^{N\times d}$ is consensual, i.e., $f_1^*=f_2^*=\cdots=f_N^*=f_c^*$ where $f_c^*$ is the optimal solution to \eqref{eq:prelimi_1}.
		\item There exists a pair $(\hat{f}^*,\Lambda^*)$ with $2\Lambda^* = (D-A)X$ for some $X\in \mathbb{R}^{N\times d}$ such that
		\begin{eqnarray}
		\nabla \hat{O}(\hat{f}^*,D_{all})+2\Lambda^*=\textbf{0}_{N\times d} ~; 
		\label{eq:c_4}\\
		(D-A)\hat{f}^* = \textbf{0}_{N\times d}~. 
		\label{eq:c_5}
		\end{eqnarray}
	\end{itemize}
\end{lemma}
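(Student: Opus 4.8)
The plan is to prove the two implications separately, with the pivotal linear-algebra fact being that for a connected graph the symmetric positive semi-definite Laplacian $D-A$ satisfies $\text{Range}(D-A) = \text{Null}(D-A)^{\perp} = \{\textbf{1}\}^{\perp}$ (applied column-wise to matrices in $\mathbb{R}^{N\times d}$), which follows from property (i) and property (ii) via the spectral theorem. Before either direction, I would record the reduction that when $\hat{f}$ is consensual with common row $f_c$, the separable objective collapses to the centralized one, $\sum_{i=1}^N O(f_c,D_i) = O_{ERM}(f_c,D_{all})$, so that the centralized first-order condition $\sum_{i=1}^N \nabla O(f_c,D_i) = \textbf{0}_{d}$ is exactly the statement $\textbf{1}^T \nabla\hat{O}(\hat{f},D_{all}) = \textbf{0}_{d}^T$.

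For the direction (first statement) $\Rightarrow$ (second), I would start from $\hat{f}^* = \textbf{1}(f_c^*)^T$: each column is a scalar multiple of $\textbf{1}$, hence lies in $\text{Null}(D-A)$, which immediately gives $(D-A)\hat{f}^* = \textbf{0}_{N\times d}$, establishing \eqref{eq:c_5}. Because $f_c^*$ minimizes the convex centralized objective \eqref{eq:prelimi_1}, stationarity yields $\textbf{1}^T\nabla\hat{O}(\hat{f}^*,D_{all}) = \textbf{0}_{d}^T$, so every column of $\nabla\hat{O}(\hat{f}^*,D_{all})$ is orthogonal to $\textbf{1}$ and therefore lies in $\text{Range}(D-A)$. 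Consequently there exists $X\in\mathbb{R}^{N\times d}$ with $-(D-A)X = \nabla\hat{O}(\hat{f}^*,D_{all})$; setting $2\Lambda^* = (D-A)X$ simultaneously enforces the structural requirement $2\Lambda^* = (D-A)X$ and delivers \eqref{eq:c_4}.

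For the converse, I would read $(D-A)\hat{f}^* = \textbf{0}_{N\times d}$ through property (ii): every column of $\hat{f}^*$ lies in $\text{Null}(D-A) = c\textbf{1}$, forcing $f_1^* = f_2^* = \cdots = f_N^* =: f_c^*$, i.e. consensus. To verify that this common vector is optimal, I would left-multiply \eqref{eq:c_4} by $\textbf{1}^T$; since $D-A$ is symmetric with $\textbf{1}$ in its null space, $\textbf{1}^T(D-A)X = \textbf{0}_{d}^T$, whence $\textbf{1}^T\nabla\hat{O}(\hat{f}^*,D_{all}) = \textbf{0}_{d}^T$, that is $\sum_{i=1}^N \nabla O(f_c^*,D_i) = \textbf{0}_{d}$. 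By convexity (Assumption 1) this stationarity is sufficient for $f_c^*$ to be a global minimizer of \eqref{eq:prelimi_1}, while Assumption 2 guarantees such a minimizer exists, closing the equivalence.

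The main obstacle I anticipate is not the algebra of either implication but the column-space/null-space correspondence: one must justify that orthogonality to $\textbf{1}$ is not merely necessary but also \emph{sufficient} for membership in $\text{Range}(D-A)$, which is precisely where symmetry and positive semi-definiteness (property (i)) combine with the one-dimensional null space of a connected graph (property (ii)). A secondary care point is handling the matrix (multi-column, $d$-dimensional) version uniformly by applying the scalar result to each of the $d$ columns, which deserves an explicit but brief statement.
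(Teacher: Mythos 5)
Your proof is correct and complete. One thing to be aware of: the paper itself contains no proof of Lemma \ref{Lemma:1} --- it is stated as an imported result, with the citation \cite{ling2016} standing in for the argument --- so there is no in-paper proof to compare yours against; what you have written is a self-contained substitute for that external reference. Your route is also the standard one in that literature. Both implications turn on exactly the fact you isolate: for a connected graph the Laplacian $D-A$ is symmetric, so $\text{Range}(D-A)=\text{Null}(D-A)^{\perp}=\{c\textbf{1}\}^{\perp}$ by property (ii), applied column by column to matrices in $\mathbb{R}^{N\times d}$. (In fact symmetry plus property (ii) is all you need for that duality; the positive semi-definiteness in property (i) is not required for it.) Your forward direction correctly constructs $\Lambda^*$ by solving $(D-A)X=-\nabla\hat{O}(\hat{f}^*,D_{all})$, which is solvable precisely because stationarity of the centralized objective \eqref{eq:prelimi_1} makes every column of the stacked gradient orthogonal to $\textbf{1}$; your converse correctly extracts consensus from \eqref{eq:c_5} via the null-space property, and recovers centralized stationarity by left-multiplying \eqref{eq:c_4} by $\textbf{1}^T$ and using $\textbf{1}^T(D-A)=\textbf{0}^T$, with convexity (Assumption 1) upgrading stationarity to global optimality and Assumption 2 supplying existence of a minimizer. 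The bookkeeping identity $\sum_{i=1}^N O(f_c,D_i)=O_{ERM}(f_c,D_{all})$ (the $N$ copies of the $\frac{\rho}{N}R$ term summing to $\rho R$) is also right, so the equivalence is fully established.
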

Lemma \ref{Lemma:1} shows that a pair $(\hat{f}^*,\Lambda^*)$ satisfying \eqref{eq:c_4}\eqref{eq:c_5} is equivalent to the optimal solution of our problem, hence the convergence of R-ADMM is proved by showing that $(\hat{f}(t),\Lambda(t))$ in \eqref{eq:c_2}\eqref{eq:c_3} converges to a pair $(\hat{f}^*,\Lambda^*)$ satisfying \eqref{eq:c_4}\eqref{eq:c_5}. 

\begin{theorem}\label{thmC1}[\textbf{Sufficient Condition}]
	Consider the modified ADMM defined by \eqref{eq:c_2}\eqref{eq:c_3}. Let $\{\hat{f}(t),\Lambda(t)\}$ be outputs in each iteration and $\{\hat{f}^*,\Lambda^*\}$ a pair satisfying \eqref{eq:c_4}\eqref{eq:c_5}. Denote $D_M = \textbf{diag}([M_1^2;M_2^2;\cdots;M_N^2])\in \mathbb{R}^{N \times N}$ with $0<M_i<+\infty$ as given in Assumption 3. If the following two conditions hold for some constants $L>0$ and $\mu>1$:
	\begin{eqnarray}
	(I+\eta (D+A) \tilde{D}^{-1}) \succ \frac{L\mu}{2\sigma_{\min}(\tilde{D})}\frac{1}{\eta}D_M(D-A)^{+} ~;\label{eq:c_6}\\
	\eta(D+A)\succ \{\eta (D+A) \tilde{D}^{-1} \eta(D-A)\nonumber\\ +\frac{2}{L}\eta (D+A) \tilde{D}^{-1}\eta (D+A)+\frac{L\mu}{2\sigma_{\min}(\tilde{D})(\mu-1)}D_M \}~.\label{eq:c_7}
	\end{eqnarray}
	where $\sigma_{\min}(\tilde{D}) = \min_i\{2\eta V_i+\gamma\}$ is the smallest singular value of $\tilde{D}$, then $(\hat{f}(t),\Lambda(t))$ converges to $(\hat{f}^*,\Lambda^*)$.
\end{theorem}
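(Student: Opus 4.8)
The plan is to follow the standard Lyapunov (quasi-Fej\'er) route for ADMM-type recursions, working throughout in the Frobenius inner product $\langle X,Y\rangle=\mathrm{tr}(X^TY)$ and weighted seminorms $\|X\|_M^2=\mathrm{tr}(X^TMX)$ for $M\succeq 0$. First I would pass to error coordinates $e_f(t)=\hat f(t)-\hat f^*$, $e_\Lambda(t)=\Lambda(t)-\Lambda^*$, and $g(t)=\nabla\hat O(\hat f(t),D_{all})-\nabla\hat O(\hat f^*,D_{all})$. A quick check using \eqref{eq:c_4}\eqref{eq:c_5} shows that the constant sequence $(\hat f^*,\Lambda^*)$ solves \eqref{eq:c_2}\eqref{eq:c_3}: substituting $\nabla\hat O(\hat f^*)=-2\Lambda^*$ and $(D-A)\hat f^*=\textbf{0}_{N\times d}$ makes every term cancel. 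Subtracting then yields the homogeneous error dynamics
\begin{align*}
&g(t+1)+\eta(D+A)\tilde D^{-1}g(t)+2e_\Lambda(t+1) \\
&\quad +\eta(D+A)\bigl(e_f(t+1)-e_f(t)+\tilde D^{-1}\eta(D-A)e_f(t)\bigr) \\
&\quad +\eta(D+A)\tilde D^{-1}\,2e_\Lambda(t)=\textbf{0}_{N\times d}~, \\
&2e_\Lambda(t+1)=2e_\Lambda(t)+\eta(D-A)e_f(t+1)~.
\end{align*}
Three analytic facts then drive everything. Convexity (Assumption 1) makes $\nabla O$ monotone, giving the sign condition $\langle e_f(t+1),g(t+1)\rangle\ge 0$; Lipschitz continuity \eqref{eq:assume1} (Assumption 3) gives $\|g(t)\|_F^2\le\|e_f(t)\|_{D_M}^2$, which is the sole origin of $D_M$ in the statement; and the dual update keeps every increment of $e_\Lambda$ inside $\mathrm{range}(D-A)$, orthogonal to the consensus null space $c\textbf{1}$, so that (with duals initialized at $\textbf{0}$) $(D-A)^+$ acts as a genuine inverse on the dual errors and converts $\|e_\Lambda\|^2$ into Laplacian quadratic forms --- this is where $(D-A)^+$ enters \eqref{eq:c_6}.

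Next I would propose an energy function $V(t)=\|e_f(t)\|_G^2+\|e_\Lambda(t)\|_H^2$ with $G,H\succeq 0$ chosen to match the $\tilde D^{-1}$ and $(D-A)$ weightings appearing above, and compute $V(t)-V(t+1)$ by inserting the error form of \eqref{eq:c_3} for $e_\Lambda(t+1)-e_\Lambda(t)$ and pairing the error form of \eqref{eq:c_2} against $e_f(t+1)-e_f(t)$ and against $e_f(t+1)$. The aim is to express $V(t)-V(t+1)$ as a manifestly nonnegative \emph{progress} part --- essentially $\|e_f(t+1)-e_f(t)\|^2_{\eta(D+A)}$ together with a consensus term weighted by $(I+\eta(D+A)\tilde D^{-1})$ --- minus a collection of indefinite cross terms: the recycled stale-gradient coupling $\eta(D+A)\tilde D^{-1}g(t)$ and the mixed primal/dual products.

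The heart of the argument is to absorb these indefinite terms into the progress part via two Young-type splits. A split $2\langle a,b\rangle\le L\|a\|^2+\tfrac1L\|b\|^2$ with $L>0$ controls the cross term between the stale gradient and the primal increment, producing the $\tfrac2L\eta(D+A)\tilde D^{-1}\eta(D+A)$ piece of \eqref{eq:c_7}; a second split $\|a+b\|^2\le\mu\|a\|^2+\tfrac{\mu}{\mu-1}\|b\|^2$ with $\mu>1$ separates the part of the gradient energy charged to consensus from the part bounded through $\|g\|_F^2\le\|e_f\|_{D_M}^2$, producing the $\tfrac{L\mu}{2\sigma_{\min}(\tilde D)(\mu-1)}D_M$ piece of \eqref{eq:c_7} and, after the $(D-A)^+$ conversion on the dual side, the $\tfrac{L\mu}{2\sigma_{\min}(\tilde D)}\tfrac1\eta D_M(D-A)^+$ piece of \eqref{eq:c_6}; the factor $\sigma_{\min}(\tilde D)^{-1}=\|\tilde D^{-1}\|$ arises exactly when the Lipschitz bound is pushed through the $\tilde D^{-1}$ weighting, while the remaining $\eta(D+A)\tilde D^{-1}\eta(D-A)$ term in \eqref{eq:c_7} comes directly from the $\tilde D^{-1}\eta(D-A)e_f(t)$ term in the dynamics. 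Demanding that the leftover quadratic forms stay positive definite is then \emph{precisely} the pair \eqref{eq:c_6}\eqref{eq:c_7}, under which $V(t)-V(t+1)$ is bounded below by positive-definite forms in the residuals.

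Finally, monotone nonincrease and boundedness below give $V(t)\to V_\infty$, hence the residuals $e_f(t+1)-e_f(t)$, $(D-A)e_f(t+1)$ and $g(t)$ are summable and vanish; feeding these back into \eqref{eq:c_2}\eqref{eq:c_3} forces any limit point to satisfy \eqref{eq:c_4}\eqref{eq:c_5}, which by Lemma \ref{Lemma:1} is optimal and consensual, and since $V$ is simultaneously the weighted distance to any such optimal pair and is monotone, a standard quasi-Fej\'er/Opial argument upgrades subsequential to full convergence $(\hat f(t),\Lambda(t))\to(\hat f^*,\Lambda^*)$. I expect the main obstacle to be the bookkeeping that makes the two Young splits land on \eqref{eq:c_6}\eqref{eq:c_7} verbatim: the even-iteration recycling injects a one-step-stale gradient $g(t)$ whose sign is wrong for monotonicity, so convexity cannot touch it and it must be routed through Lipschitz continuity, the $\tilde D^{-1}$ weighting, and the $(D-A)^+$ dual-norm conversion at once; choosing $G,H$ and allocating energy between the consensus term and the $D_M$ term so that \emph{both} residual forms come out positive, rather than one being bought at the other's expense, is exactly what forces the introduction of the two free constants $L$ and $\mu$ and is the step demanding the most care.
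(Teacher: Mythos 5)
Your proposal follows essentially the same route as the paper's own proof in Appendix~\ref{App1}: the paper's quantity $\frac{1}{2}\|\hat f(t)-\hat f^*\|^2_{\eta(D+A)-G_1}+\frac{1}{2}\|2\Lambda^*-2\Lambda(t)\|^2_{G_2}$ is exactly your Lyapunov function $V(t)$, and the paper likewise derives its decrease from gradient monotonicity, routes the stale gradient $g(t)$ through the Lipschitz bound $\|g\|_F^2\le\|e_f\|_{D_M}^2$ with the $\sigma_{\min}(\tilde D)$ factor, converts dual errors via $(D-A)^+$, and deploys the same two Young-type splits (with the same constants $L$ and $\mu$) so that conditions \eqref{eq:c_6}\eqref{eq:c_7} make the residual forms $R_1,R_2$ positive definite, before telescoping and passing to the limit in \eqref{eq:c_2}\eqref{eq:c_3} to invoke Lemma~\ref{Lemma:1}. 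The only (harmless) difference is presentational: you phrase the endgame as a quasi-Fej\'er/Opial upgrade to full convergence, where the paper argues directly that the vanishing increments force convergence to a stationary point satisfying \eqref{eq:c_4}\eqref{eq:c_5}.
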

\begin{proof}
	\rev{See Appendix \ref{App1}.}
\end{proof}

By controlling $\gamma$, it is easy to find constants $L>0$ and $\mu>1$ such that  conditions \eqref{eq:c_6}\eqref{eq:c_7} are satisfied, and they are not unique. One example is $L=2$ and $\mu=2$, in which case \eqref{eq:c_6}\eqref{eq:c_7} are reduced to:
\begin{eqnarray}
(I+\eta (D+A) \tilde{D}^{-1}) \succ \frac{4}{2\sigma_{\min}(\tilde{D})}\frac{1}{\eta}D_M(D-A)^{+} ~;\label{eq:c_8}\\
\eta(D+A)\succ 2\eta (D+A) \tilde{D}^{-1} \eta D +\frac{2}{\sigma_{\min}(\tilde{D})}D_M ~.\label{eq:c_9}
\end{eqnarray}
\eqref{eq:c_8}\eqref{eq:c_9} can be easily satisfied for sufficiently large $\gamma\geq 0$. Note that the conditions are sufficient but not necessary, so in practice convergence may be attained under weaker settings. 

\section{Private R-ADMM}\label{sec:pradmm} 

In this section we present a privacy preserving version of R-ADMM. In odd iterations, we adopt the objective perturbation \cite{chaudhuri2011} where a random linear term $\epsilon_i(2k-1)^Tf_i$ is added to the objective function in \eqref{eq:r-admm3}
\footnote{Pure differential privacy was adopted in this work, but the weaker $(\epsilon,\delta)$-differential privacy can be applied as well.}, where $\epsilon_i(2k-1)$ follows the probability density proportional to $\exp\{-\alpha_i(k)||\epsilon_i(2k-1)||_2\}$.
\begin{eqnarray}\label{eq:P_modify_1}
{L}_i^{priv}(2k-1) = O(f_i,D_i) + (2\lambda_i(2k-2)+\epsilon_i(2k-1))^Tf_i \nonumber \\
+  \eta \sum_{j \in \mathscr{V}_i}||\dfrac{1}{2}(f_i(2k-2)+f_j(2k-2))-f_i||_2^2 \nonumber
\end{eqnarray}
To generate this noisy vector, choose the norm from the gamma distribution with shape $d$ and scale $\frac{1}{\alpha_i(k)}$ and the direction uniformly, where $d$ is the dimension of the feature space. Node $i$'s local result is obtained by finding the optimal solution to the private objective function: 
\begin{equation}\label{eq:P_modify_2}
f_i(2k-1) = \underset{f_i}{\text{argmin}}\ {L}_i^{priv}(2k-1) , \ \ i \in \mathscr{N}~. 
\end{equation}
\rev{In the $2k$-th iteration, use the stored results $\epsilon_i(2k-1) + \nabla O(f_i(2k-1),D_i)$ and $\dfrac{\eta}{2}\sum_{j \in \mathscr{V}_i}(f_i(2k-1)-f_j(2k-1))$ to update primal variables, where the latter can be obtained from the dual update in the $(2k-1)$-th update, and the former can be obtained directly from the KKT condition in the $(2k-1)$-th iteration:
	\begin{eqnarray*}\label{info}
\epsilon_i(2k-1) + \nabla O(f_i(2k-1),D_i) = -2\lambda_{i}(2k-2)\nonumber \\-\eta\sum_{j \in \mathscr{V}_i}(2f_i(2k-1))-f_i(2k-2)-f_j(2k-2)) ~. 
	\end{eqnarray*}
Then the even update is given by: 
\begin{eqnarray}\label{eq:P_modify_3}
f_i(2k)=f_i(2k-1) - \frac{1}{2\eta V_i+\gamma}\{2\lambda_i(2k-1)\nonumber \\+\underbrace{\epsilon_i(2k-1) +\nabla O(f_i(2k-1),D_i)}_{\text{the existing result by KKT} } \nonumber \\+\underbrace{\eta\sum_{j\in \mathscr{V}_i}(f_i(2k-1)-f_j(2k-1))}_{\text{the existing result by the previous dual update}}\}~. 
\end{eqnarray}}
 Algorithm \ref{A2} shows the complete procedure, where the condition used to generate $\eta$ helps to bound the worst-case privacy loss but is not necessary in guaranteeing convergence.
\begin{algorithm}\label{A2}
	\textbf{Input: }{$\{D_i\}_{i=1}^N$}, $\{\alpha_i(1),\cdots, \alpha_i(K)\}_{i=1}^N$
	
	\textbf{Initialize: }$\forall i$, generate $f_i(0)$ randomly, $\lambda_i(0) = \textbf{0}_{d \times 1}$ 
	
	\textbf{Parameter: }Select $\eta$ s.t. $2c_1<\min_i\{\frac{B_i}{C}(\frac{\rho}{N} + 2\eta V_i)\}$
	
	\For{$k=1$ \KwTo$K$}{
		\For{$i=1$ \KwTo$\mathscr{N}$}{
			Generate noise $\epsilon_i(2k-1) \sim \exp(-\alpha_i(k)||\epsilon||_2)$;
			
			Update primal variable $f_i(2k-1)$ via \eqref{eq:P_modify_2};
			
			
			Broadcast $f_i(2k-1)$ to all neighbors $j \in \mathscr{V}_i$.
		}	
		
		\For{$i=1$ \KwTo$\mathscr{N}$}{
			Calculate  $\dfrac{\eta}{2}\sum_{j \in \mathscr{V}_i}(f_i(2k-1)-f_j(2k-1))$;
			
			Update dual variable $\lambda_i(2k-1)$ via \eqref{eq:r-admm4}.
		}	
		\For{$i=1$ \KwTo$\mathscr{N}$}{
			Use the stored information $\epsilon_i(2k-1)+\nabla O(f_i(2k-1),D_i)$ and $\dfrac{\eta}{2}\sum_{j \in \mathscr{V}_i}(f_i(2k-1)-f_j(2k-1))$ to update primal variable $f_i(2k)$ via \eqref{eq:P_modify_3};
			
			Keep the dual variable $\lambda_i(2k)=\lambda_i(2k-1)$;
			
			Broadcast $f_i(2k)$ to all neighbors $j \in \mathscr{V}_i$.
		}	
	}
	\textbf{Output: }{Upper bound of the total privacy loss $\beta$; primal $\{f_i(2K)\}_{i=1}^N$ and dual $\{\lambda_i(2K)\}_{i=1}^N$}	
	\caption{Private R-ADMM}
\end{algorithm}

In the distributed and iterative setting, the ``output'' of the algorithm is not merely the end result, but includes all intermediate results generated and exchanged during the iterative process. For this reason, we adopt the differential privacy definition proposed in \cite{xueru} as follows.  
\begin{definition}\label{Def}	
	Consider a connected network $G(\mathscr{N},\mathscr{E})$ with a set of nodes $\mathscr{N} = \{1,2,\cdots,N\}$. Let $f(t) = \{f_i(t)\}_{i=1}^N$ denote the information exchange of all nodes in the $t$-th iteration.
	A distributed algorithm is said to satisfy $\beta$-differential privacy during $T$ iterations if for any two datasets $D_{all} = \cup_i D_i$ and $\hat{D}_{all} = \cup_i \hat{D}_i$, differing in at most one data point, and for any set of possible outputs $S$ during $T$ iterations, the following holds:
	\begin{equation*}
	\frac{\text{Pr}(\{f(t)\}_{t=0}^T \in S|D_{all})}{\text{Pr}(\{f(t)\}_{t=0}^T \in S|\hat{D}_{all})} \leq \exp(\beta)
	\end{equation*}
\end{definition}

We now state another result of this paper, on the privacy property of the private R-ADMM (Algorithm \ref{A2}) using the above definition. Additional assumptions on $\mathscr{L}(\cdot)$ and $R(\cdot)$ are used. 

\textbf{\textit{Assumption 4}:} The loss function $\mathscr{L}$ is strictly convex and twice differentiable. $|\nabla\mathscr{L}| \leq 1$ and $0 <\mathscr{L}''\leq c_1$ with $c_1$ being a constant. 

\textbf{\textit{Assumption 5}:} The regularizer $R$ is 1-strongly convex and twice continuously differentiable. 
\begin{lemma}\label{lemmaP1}
	Consider the private R-ADMM (Algorithm \ref{A2}), $\forall k=1,\cdots K$, assume the total privacy loss up to the $(2k-1)$-th iteration can be bounded by $\beta_{2k-1}$, then the total privacy loss up to the $2k$-th iteration can also be bounded by $\beta_{2k-1}$. In other words, given the private results in odd iterations, outputting private results in the even iterations does not release more information about the input data.  
\end{lemma}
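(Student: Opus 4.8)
The plan is to show that the even-iteration outputs are a deterministic, data-independent post-processing of the odd-iteration outputs, so that by the post-processing invariance of differential privacy they consume no additional privacy budget beyond $\beta_{2k-1}$.

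First I would establish that every dual variable is already a deterministic function of the released primal iterates. Since $\lambda_i(0)=\mathbf{0}$ and the dual update \eqref{eq:r-admm4} writes $\lambda_i(2k-1)$ in terms of $\lambda_i(2k-2)$ and the broadcast primals $f_i(2k-1),f_j(2k-1)$, while Algorithm \ref{A2} sets $\lambda_i(2k)=\lambda_i(2k-1)$, a straightforward induction shows that each $\lambda_i(t)$ is a fixed function of $\{f(s)\}_{s\le t}$ and the public initialization. It therefore carries no information about $D_i$ beyond what the released primals already reveal.

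The crux, which I expect to be the main point to state cleanly, is to verify that $f_i(2k)$ in \eqref{eq:P_modify_3} can be written with no fresh dependence on the raw data or on the noise $\epsilon_i(2k-1)$. The term $\epsilon_i(2k-1)+\nabla O(f_i(2k-1),D_i)$ appearing in \eqref{eq:P_modify_3} depends individually on both the private data and the fresh noise, yet the first-order (KKT) optimality condition of the odd update \eqref{eq:P_modify_2} pins this exact sum down to $-2\lambda_i(2k-2)-\eta\sum_{j\in\mathscr{V}_i}(2f_i(2k-1)-f_i(2k-2)-f_j(2k-2))$, an expression in released primals together with the (data-independent, by the previous step) quantity $\lambda_i(2k-2)$. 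The remaining ingredients of \eqref{eq:P_modify_3} --- namely $f_i(2k-1)$, $2\lambda_i(2k-1)$, and $\eta\sum_{j\in\mathscr{V}_i}(f_i(2k-1)-f_j(2k-1))$ --- are likewise deterministic functions of $\{f(s)\}_{s\le 2k-1}$. Hence the map $g:\{f(t)\}_{t=0}^{2k-1}\mapsto\{f(t)\}_{t=0}^{2k}$ that appends the even-iteration broadcasts is deterministic and never queries the data; in particular no new noise is drawn in even iterations.

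Finally I would invoke post-processing within Definition \ref{Def}. Writing $O_{2k-1}=\{f(t)\}_{t=0}^{2k-1}$, for any measurable output set $S$ over $\{f(t)\}_{t=0}^{2k}$ let $\tilde{S}=\{o:(o,g(o))\in S\}$; then $\Pr(\{f(t)\}_{t=0}^{2k}\in S\mid D_{all})=\Pr(O_{2k-1}\in\tilde{S}\mid D_{all})$ for every input, and identically for $\hat{D}_{all}$, since the event $\{f(t)\}_{t=0}^{2k}\in S$ has the same preimage $\tilde S$ under the data-independent map $g$. Taking the ratio and applying the assumed $\beta_{2k-1}$ bound to the set $\tilde S$ gives
\[
\frac{\Pr(\{f(t)\}_{t=0}^{2k}\in S\mid D_{all})}{\Pr(\{f(t)\}_{t=0}^{2k}\in S\mid \hat{D}_{all})}=\frac{\Pr(O_{2k-1}\in\tilde{S}\mid D_{all})}{\Pr(O_{2k-1}\in\tilde{S}\mid \hat{D}_{all})}\le \exp(\beta_{2k-1}),
\]
which is the claim. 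The only thing to be careful about is that $g$ contains no hidden data access; once the KKT substitution above is made this is immediate, so the even iterations release nothing new about the input.
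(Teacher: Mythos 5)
Your proposal is correct and follows essentially the same route as the paper's proof: both use the KKT condition of the odd update \eqref{eq:P_modify_2} to show that the sum $\epsilon_i(2k-1)+\nabla O(f_i(2k-1),D_i)$ appearing in \eqref{eq:P_modify_3} is already determined by previously released primal iterates, so the even update is a deterministic, data-free mapping of the odd-iteration outputs, and then conclude by the post-processing immunity of differential privacy. The only difference is one of explicitness --- you spell out the induction showing the dual variables are functions of released primals and the preimage argument for post-processing, where the paper simply asserts the former and cites \cite{dwork2014algorithmic} for the latter.
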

\begin{proof}
	\rev{See Appendix \ref{App_2}.}
\end{proof}

\begin{theorem}\label{thmP}
	Normalize feature vectors in the training set such that $||x_{i}^n||_2\leq 1$ for all $i \in \mathscr{N}$ and $n$. Then the private R-ADMM algorithm (Algorithm 2) satisfies the $\beta$-differential privacy with 
	\begin{equation}
	\beta \geq \underset{i \in \mathscr{N}}{\max}\{\sum_{k=1}^{K}\frac{2C}{B_i}(\frac{1.4c_1}{(\frac{\rho}{N}+2\eta V_i)} + \alpha_i(k))\}~. 
	\end{equation}
\end{theorem}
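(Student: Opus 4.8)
The plan is to combine Lemma \ref{lemmaP1} with an objective-perturbation argument applied one odd iteration at a time, and then compose the per-iteration bounds across the $K$ odd iterations. First I would invoke Lemma \ref{lemmaP1} to discard the even iterations entirely: since the even-iteration primal update is a deterministic function of quantities already released in the preceding odd iteration (by the KKT identity the recycled term $\epsilon_i(2k-1)+\nabla O(f_i(2k-1),D_i)$ equals a function of previously exchanged primal/dual messages), outputting the even results leaks no additional information. Hence the total privacy loss over the $2K$ iterations equals the accumulated loss over the $K$ odd iterations, and it suffices to bound the latter.

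Next I would factorize the joint likelihood of the exchanged sequence $\{f(t)\}$ appearing in Definition \ref{Def} as a product of per-node, per-iteration conditional densities. Because $D_{all}$ and $\hat{D}_{all}$ differ in a single data point belonging to exactly one node, say $i_0$, every conditional factor for $j\neq i_0$ is identical under the two datasets and cancels in the ratio; only node $i_0$'s odd-iteration factors survive. This reduces the task to bounding, for each odd iteration $2k-1$, the log-ratio of the density of $f_{i_0}(2k-1)$ under $D_{i_0}$ versus $\hat{D}_{i_0}$, and finally taking the worst case over which node holds the differing point, which produces the outer $\max_i$.

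For a single odd iteration I would use the change of variables between the injected noise $\epsilon_i(2k-1)$ and the optimizer $f_i(2k-1)$ of \eqref{eq:P_modify_2}. Strong convexity (Assumptions 4--5 give $\nabla^2 O\succeq\frac{\rho}{N}I$) makes this map a bijection, so the output density equals the noise density evaluated at the unique $\epsilon$ consistent with the KKT condition, times the Jacobian $|\det(\nabla^2 O(f,D_i)+2\eta V_i I)|$. The log-ratio then splits into two pieces. The noise-density piece is $\alpha_i(k)(\|\hat\epsilon\|-\|\epsilon\|)\le\alpha_i(k)\|\epsilon-\hat\epsilon\|$; since changing one sample perturbs $\nabla O$ by at most $\frac{2C}{B_i}$ (using $|\nabla\mathscr{L}|\le 1$, $\|x_i^n\|_2\le 1$, $|y_i^n|=1$), this piece is at most $\frac{2C}{B_i}\alpha_i(k)$, matching the $\alpha$-term in the claim.

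The main obstacle is the Jacobian-determinant piece $\ln\frac{\det(\nabla^2 O(f,D_i)+2\eta V_i I)}{\det(\nabla^2 O(f,\hat D_i)+2\eta V_i I)}$. Writing $M=\nabla^2 O(f,\hat D_i)+2\eta V_i I$ and $E$ for the Hessian difference, $E$ is a difference of two rank-one positive semidefinite matrices of the form $\frac{C}{B_i}\mathscr{L}''\,xx^T$, so $M^{-1/2}EM^{-1/2}$ has at most two nonzero eigenvalues, each bounded in magnitude by $\kappa:=\frac{c_1 C/B_i}{\rho/N+2\eta V_i}$ (using $0<\mathscr{L}''\le c_1$, $\|x\|_2\le 1$, and $M\succeq(\frac{\rho}{N}+2\eta V_i)I$). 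The algorithm's parameter condition $2c_1<\min_i\frac{B_i}{C}(\frac{\rho}{N}+2\eta V_i)$ forces $\kappa<\frac12$, and on that range the elementary inequality $|\ln(1+\nu)|\le 1.4\,\kappa$ holds whenever $|\nu|\le\kappa$ (the worst case is $|\ln(1-\frac12)|/\frac12\approx 1.386$). Bounding the log-determinant by the sum over its two eigenvalues via the triangle inequality then yields $2\cdot 1.4\,\kappa=\frac{2C}{B_i}\frac{1.4c_1}{\rho/N+2\eta V_i}$. Adding the two pieces, summing over $k=1,\dots,K$, and maximizing over nodes gives exactly the stated $\beta$. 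The delicate points are justifying the bijection and change of variables rigorously and pinning down the constant $1.4$ from the parameter condition.
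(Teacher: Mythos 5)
Your proposal is correct and follows essentially the same route as the paper's proof: invoke Lemma \ref{lemmaP1} to discard even iterations, factor the likelihood ratio over odd iterations and reduce to the single node holding the differing point, apply the noise-to-optimizer change of variables via the KKT condition, bound the noise-density ratio by the gradient sensitivity $\tfrac{2C}{B_i}$, and bound the Jacobian ratio through the rank-two Hessian perturbation whose relative eigenvalues lie in $[-\tfrac12,\tfrac12]$ by the parameter condition, with the constant $1.4$ coming from $-\ln(1-x)\le 1.4x$ on $[0,0.5]$. The only cosmetic difference is that you symmetrize the perturbation as $M^{-1/2}EM^{-1/2}$ where the paper works directly with the eigenvalues of $H(k)^{-1}G(k)$; these are similar matrices, so the bounds coincide.
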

\begin{proof}
\rev{See Appendix \ref{App_3}.}
\end{proof}
\begin{figure}
	\centering   
	{\includegraphics[width=0.49\textwidth]{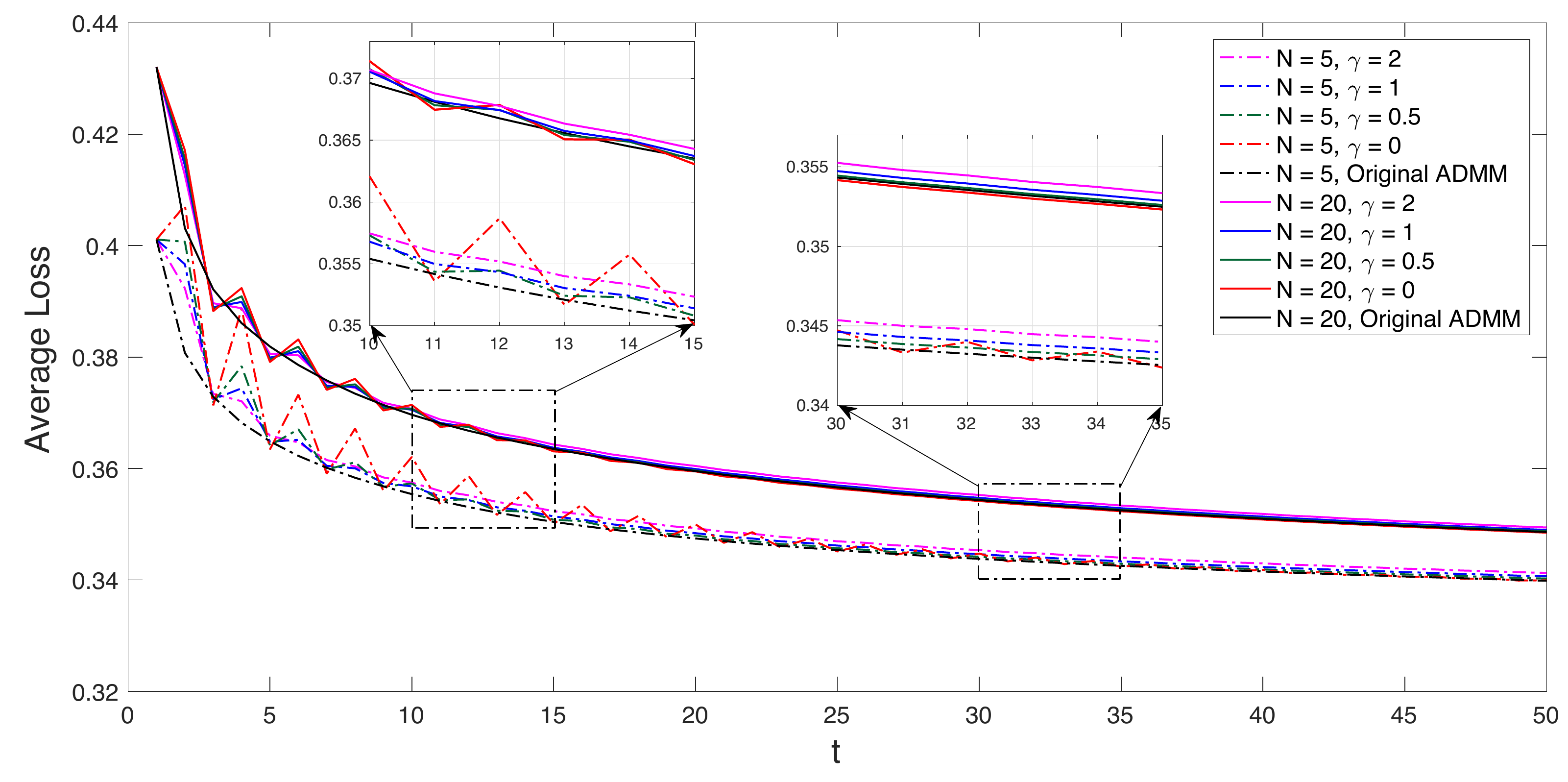}}
	\caption{Convergence properties of R-ADMM.}\label{fig1}
\end{figure}
\section{Numerical Experiments}\label{sec:numerical} 
We use the \textit{Adult} dataset from the UCI Machine Learning Repository \cite{Lichman2013}. It consists of personal information of around 48,842 individuals, including age, sex, race, education, occupation, income, etc. The goal is to predict whether the annual income of an individual is above \$50,000. 

Following the same pre-processing steps as in \cite{xueru}, the final data includes 45,223 individuals, each represented as a 105-dimensional vector of norm at most 1.
We will use as loss function the logistic loss $\mathscr{L}(z) = \log(1+\exp(-z))$, with $|\mathscr{L}'|\leq 1 $ and $\mathscr{L}'' \leq c_1 = \frac{1}{4}$. 
The regularizer is $R(f_i) = \frac{1}{2}||f_i||_2^2$. 
We will measure the accuracy of the algorithm by the average loss $L(t):=\frac{1}{N} \sum_{i=1}^{N}\frac{1}{B_i}\sum_{n=1}^{B_i}\mathscr{L}(y^n_if_i(t)^Tx^n_i) $ over the training set. We will measure the privacy of the algorithm by the upper bound $P(t):=\underset{i \in \mathscr{N}}{\max}\{\sum_{k=1}^{K}\frac{2C}{B_i}(\frac{1.4c_1}{(\frac{\rho}{N}+2\eta V_i)} + \alpha_i(k))\}$. 
The smaller $L(t)$ and $P(t)$, the higher accuracy and stronger privacy guarantee.


\subsection{Convergence of non-private R-ADMM}
\begin{figure}
	\centering   
	\subfigure[Accuracy comparison: $\alpha=2$]{\label{fig2:a}\includegraphics[trim={3cm 0 3cm 1cm},clip=true, width=0.47\textwidth]{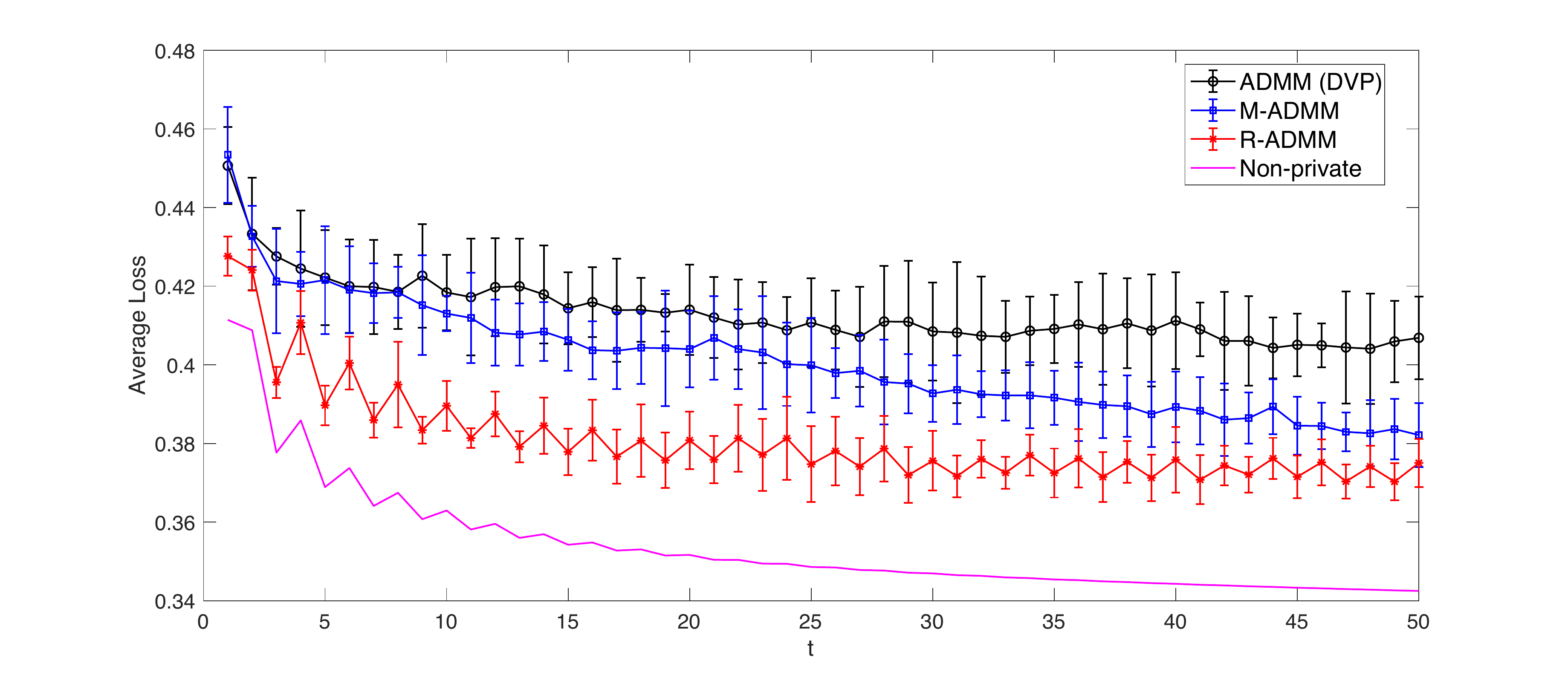}}
	\subfigure[Accuracy comparison: $\alpha=4$]{\label{fig2:b}\includegraphics[trim={3cm 0 3cm 1cm},clip=true,width=0.47\textwidth]{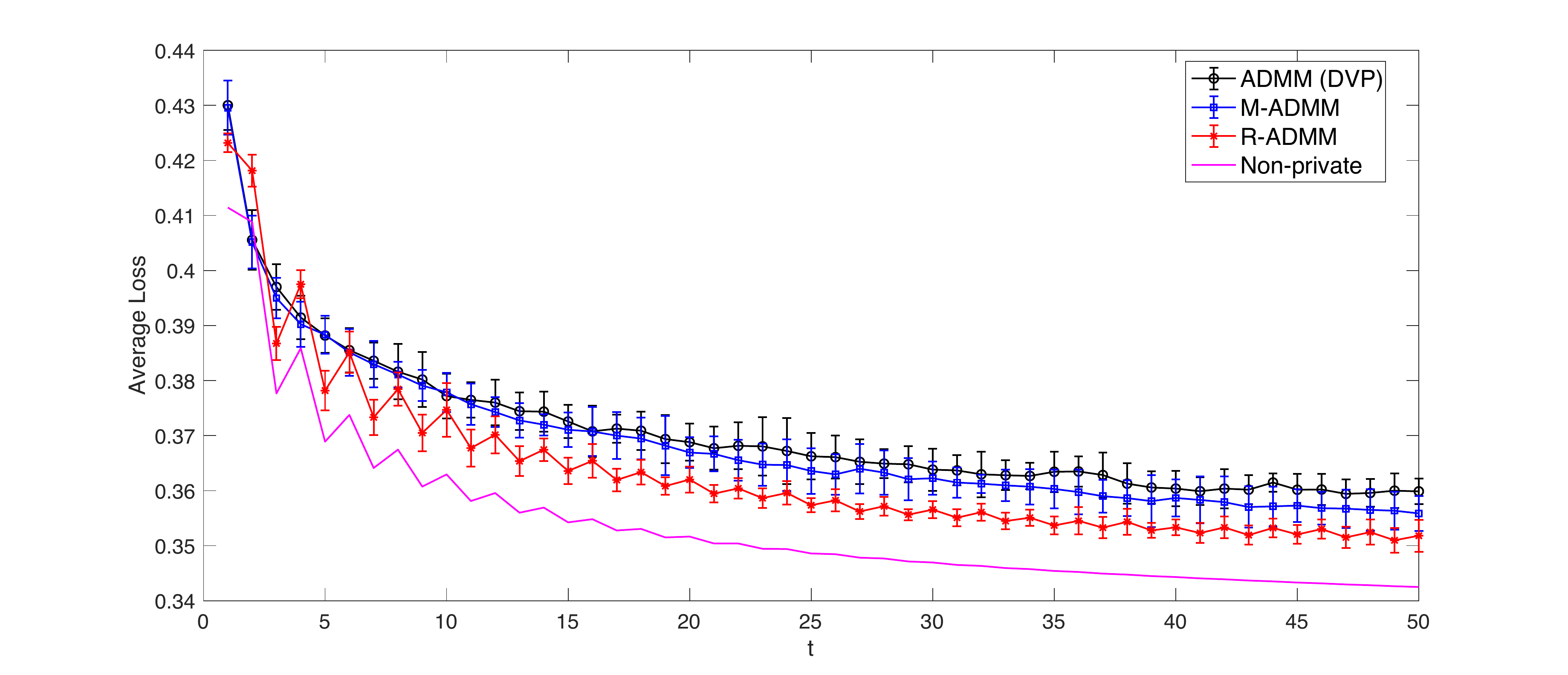}}
	\subfigure[Privacy comparison: $\alpha=2$]{\label{fig2:c}\includegraphics[trim={1.3cm 0 1.3cm 1cm},clip=true,width=0.23\textwidth]{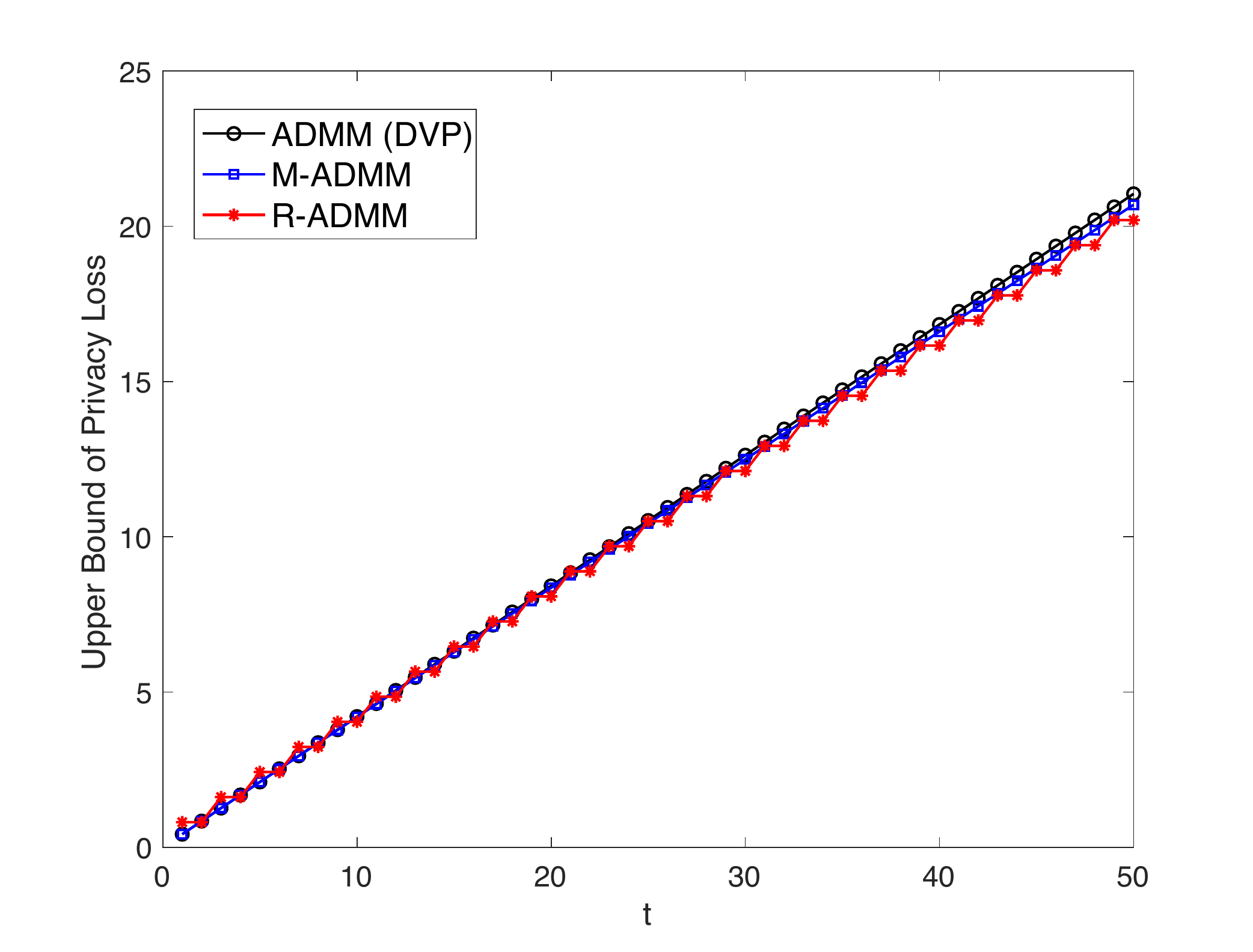}}
	\subfigure[Privacy comparison:  $\alpha=4$]{\label{fig2:d}\includegraphics[trim={1.3cm 0 1.3cm 1cm},clip=true,width=0.23\textwidth]{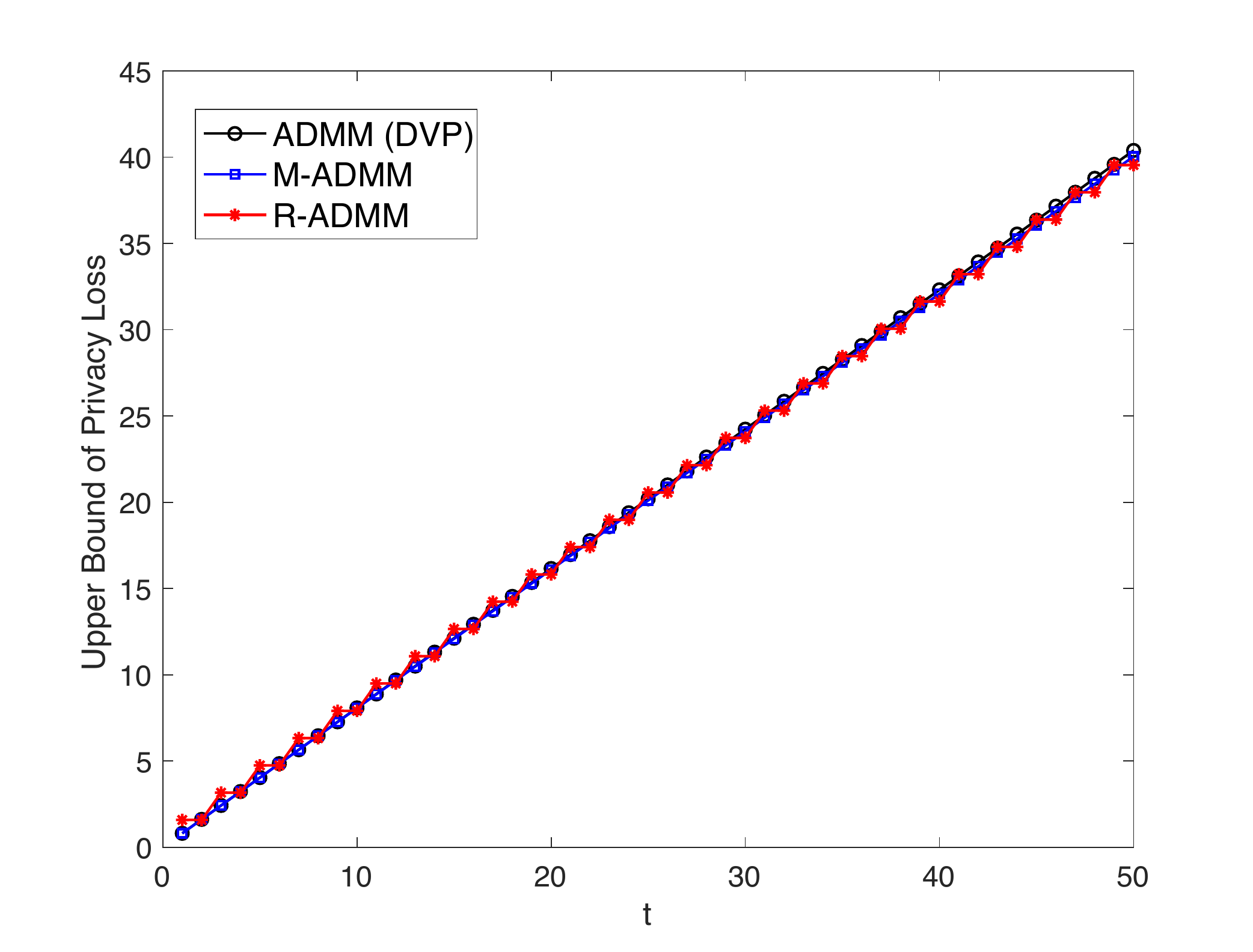}}
	\caption{Comparison of accuracy and privacy.}
	\label{fig2}
\end{figure}
Figure \ref{fig1} shows the convergence of R-ADMM with different $\gamma$ and fixed $\eta=0.5$ for a small network ($N=5$) and a large network ($N=20$), both are randomly generated. Due to the linear approximation in even iterations, it's possible to cause an increased average loss as shown in the plot. However, the odd iterations will always compensate this increase; if we only look at the odd iterations, R-ADMM achieves a similar convergence rate as conventional ADMM. $\gamma$ can also be thought of as an extra penalty parameter for each node in even iterations to punish its update, i.e., the difference between $f_i(2k)$ and $f_i(2k-1)$.  Larger $\gamma$ can result in smaller oscillation between even and odd iterations but will also lower the convergence rate.  
\subsection{Private R-ADMM}
\begin{figure}
	\centering   
	{\includegraphics[width=0.47\textwidth]{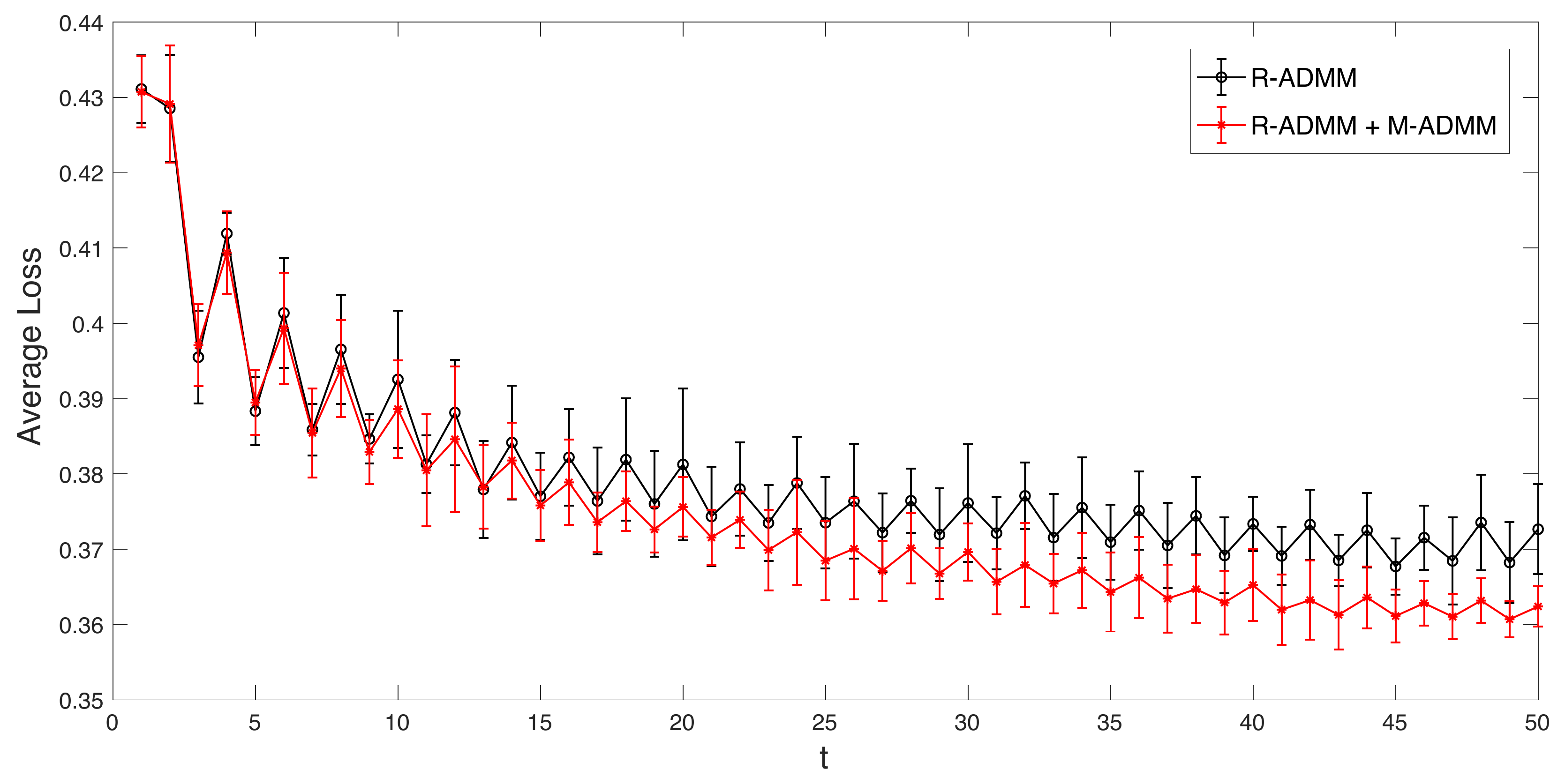}}
	\caption{Accuracy comparison: $\eta(t) = 1.01^t$, $\gamma(t) = 0.2*1.01^t$}\label{fig3}
\end{figure}
We next inspect the accuracy and privacy of the private R-ADMM (Algorithm \ref{A2}) and compare it with the private (conventional) ADMM using dual variable perturbation (DVP) \cite{zhang2017} and the private M-ADMM using penalty perturbation (PP) \cite{xueru}. In the set of experiments, we fix $\gamma = 0.2$, $\eta=1$ in private R-ADMM and set the noise parameter $\alpha_i(k) =\alpha, \forall i,k$. The noise parameters of conventional ADMM and M-ADMM are also chosen respectively such that they have almost the same total privacy loss bounds. 

For each parameter setting, we perform 10 independent runs of the algorithm, and record both the mean and the range of their accuracy.  Specifically, $L^l(t)$  denotes the average loss over the training dataset in the $t$-th iteration of the $l$-th experiment ($1\leq l \leq 10$). The mean of average loss is then given by $L_{mean}(t) = \frac{1}{10}\sum_{l=1}^{10} L^l(t)$, and the range $L_{range}(t) = \underset{1\leq l \leq 10}{\max} L^l(t) - \underset{1\leq l \leq 10}{\min} L^l(t)$. 
The larger the range $L_{range}(t)$ the less stable the algorithm, i.e., under the same parameter setting, the difference in performances (convergence curves) of every two experiments is larger. 
Each parameter setting also has a corresponding upper bound on the privacy loss denoted by $P(t)$. 
Figures \ref{fig2:a}-\ref{fig2:b} show both $L_{mean}(t)$ and $L_{range}(t)$ as vertical bars centered at $L_{mean}(t)$.  Their corresponding privacy upper bound is given in Figures \ref{fig2:c}-\ref{fig2:d}. The pair \ref{fig2:a}, \ref{fig2:c} (resp. \ref{fig2:b}, \ref{fig2:d}) is for the same parameter setting. 
We see that the private R-ADMM (red) has higher accuracy than both the private ADMM (black) and M-ADMM (blue), and the improvement is more significant with the smaller total privacy loss.

We also incorporate the idea from \cite{xueru} into private R-ADMM, where we decrease the step-size, i.e., increase $\eta$ and $\gamma$, over iterations to stabilize the algorithm and improve the algorithmic performance. The result is shown in Figure \ref{fig3}  where the privacy loss bound is controlled to be the same during the whole period. It shows that by varying the step-size, the privacy-utility tradeoff can be further improved.

\section{Conclusion}\label{sec:conclusion}
We presented Recycled ADMM (R-ADMM), a modified version of ADMM that can improve the privacy-utility tradeoff significantly with less computation. The idea is to repeatedly use the existing computational results instead of the raw data to make updates. We also established a sufficient condition for convergence and privacy analysis using objective perturbation.


\vfill 
\appendices
\begin{figure*}[t]
	\normalsize
	\begin{eqnarray}\label{eq:thmC1}
	\langle \hat{f}(t+1)-\hat{f}^*,-\eta (D+A) \tilde{D}^{-1}(\nabla \hat{O}(\hat{f}(t),D_{all}) -\nabla \hat{O}(\hat{f}^*,D_{all}))+(I+\eta (D+A) \tilde{D}^{-1})(2\Lambda^*-2\Lambda(t+1))\nonumber \\+\eta (D+A) \tilde{D}^{-1}(2\Lambda(t+1)-2\Lambda(t))-\eta(D+A)(\hat{f}(t+1)-\hat{f}(t))
	-\eta (D+A) \tilde{D}^{-1} \eta(D-A)\hat{f}(t) \rangle_F \geq 0 ~.
	\end{eqnarray}
	\hrulefill
	\begin{eqnarray}
	\langle \hat{f}(t+1)-\hat{f}^*,\eta (D+A) \tilde{D}^{-1}(2\Lambda(t+1)-2\Lambda(t)) -\eta (D+A) \tilde{D}^{-1} \eta(D-A)\hat{f}(t) \rangle_F\nonumber\\
	=\langle \hat{f}(t+1)-\hat{f}^*,\eta (D+A) \tilde{D}^{-1} \eta(D-A)(\hat{f}(t+1)-\hat{f}(t)) \rangle_F\label{eq:thmC2} \\= \frac{1}{2}||\hat{f}(t+1)-\hat{f}^*||^2_{G_1} +  \frac{1}{2}||\hat{f}(t+1)-\hat{f}(t)||^2_{G_1}- \frac{1}{2}||\hat{f}(t)-\hat{f}^*||^2_{G_1}~;\nonumber\\
	\langle \hat{f}(t+1)-\hat{f}^*, (I+\eta (D+A) \tilde{D}^{-1})(2\Lambda^*-2\Lambda(t+1)) \rangle_F
	\nonumber \\=\langle  \frac{1}{\eta}(D-A)^{+}(2\Lambda(t+1)-2\Lambda(t)), (I+\eta (D+A) \tilde{D}^{-1})(2\Lambda^*-2\Lambda(t+1)) \rangle_F
 \label{eq:thmC3}\\=\frac{1}{2}||2\Lambda^*-2\Lambda(t)||^2_{G_2}-\frac{1}{2}||2\Lambda^*-2\Lambda(t+1)||^2_{G_2} - \frac{1}{2}||2\Lambda(t+1)-2\Lambda(t)||^2_{G_2}~;\nonumber\\
	\langle \hat{f}(t+1)-\hat{f}^*,-\eta(D+A)(\hat{f}(t+1)-\hat{f}(t))\rangle_F\nonumber \\
	=\frac{1}{2}||\hat{f}(t)-\hat{f}^*||^2_{\eta(D+A)}-\frac{1}{2}||\hat{f}(t+1)-\hat{f}^*||^2_{\eta(D+A)}-\frac{1}{2}||\hat{f}(t)-\hat{f}(t+1)||^2_{\eta(D+A)}~. \label{eq:thmC4}
	\end{eqnarray}
	\hrulefill
	\begin{eqnarray}\label{eq:thmC5}
	\langle \hat{f}(t+1)-\hat{f}^*,-\eta (D+A) \tilde{D}^{-1}(\nabla \hat{O}(\hat{f}(t),D_{all}) -\nabla \hat{O}(\hat{f}^*,D_{all}))\rangle_F
	\nonumber \\=\langle \hat{f}(t+1)-\hat{f}(t)+\hat{f}(t) -\hat{f}^*,\nonumber -\eta (D+A) \tilde{D}^{-1}(\nabla \hat{O}(\hat{f}(t),D_{all}) -\nabla \hat{O}(\hat{f}^*,D_{all}))\rangle_F
	\nonumber \\ \leq \langle \hat{f}(t)-\hat{f}(t+1),\eta (D+A) \tilde{D}^{-1}(\nabla \hat{O}(\hat{f}(t),D_{all}) -\nabla \hat{O}(\hat{f}^*,D_{all}))\rangle_F \nonumber 
	\nonumber \\=  \langle \eta (D+A) \sqrt{\tilde{D}^{-1}}(\hat{f}(t)-\hat{f}(t+1)),\sqrt{\tilde{D}^{-1}}(\nabla \hat{O}(\hat{f}(t),D_{all}) -\nabla \hat{O}(\hat{f}^*,D_{all}))\rangle_F~.
	\end{eqnarray}
	\hrulefill
	\begin{eqnarray}\label{eq:thmC8}
	\eqref{eq:thmC5} \leq \frac{1}{L}||(\hat{f}(t)-\hat{f}(t+1))||^2_{\eta (D+A) \tilde{D}^{-1}\eta (D+A)} + \frac{L}{4\sigma_{\min}(\tilde{D})} ( \mu||\hat{f}^*-\hat{f}(t+1)||^2_{D_M}+\frac{\mu}{\mu-1}||\hat{f}(t+1)-\hat{f}(t)||^2_{D_M})\nonumber \\
	= \frac{1}{2}||(\hat{f}(t)-\hat{f}(t+1))||^2_{\frac{2}{L}\eta (D+A) \tilde{D}^{-1}\eta (D+A)+\frac{L\mu}{2\sigma_{\min}(\tilde{D})(\mu-1)}D_M} + \frac{1}{2}||2\Lambda(t+1)-2\Lambda(t)||^2_{\frac{L\mu}{2\sigma_{\min}(\tilde{D})}(\frac{1}{\eta}(D-A)^{+})^2D_M} 
	\end{eqnarray}
	\hrulefill
	\begin{eqnarray}\label{eq:thmC9}
	\frac{1}{2}||\hat{f}(t)-\hat{f}(t+1)||^2_{\eta(D+A)-G_1}
	- \frac{1}{2}||(\hat{f}(t)-\hat{f}(t+1))||^2_{\frac{2}{L}\eta (D+A) \tilde{D}^{-1}\eta (D+A)+\frac{L\mu}{2\sigma_{\min}(\tilde{D})(\mu-1)}D_M} \nonumber \\+  \frac{1}{2}||2\Lambda(t+1)-2\Lambda(t)||^2_{G_2}- \frac{1}{2}||2\Lambda(t+1)-2\Lambda(t)||^2_{\frac{L\mu}{2\sigma_{\min}(\tilde{D})}(\frac{1}{\eta}(D-A)^{+})^2D_M}
	\nonumber \\ \leq \frac{1}{2}||\hat{f}(t+1)-\hat{f}^*||^2_{G_1} - \frac{1}{2}||\hat{f}(t)-\hat{f}^*||^2_{G_1} +\frac{1}{2}||2\Lambda^*-2\Lambda(t)||^2_{G_2}  \nonumber\\-\frac{1}{2}||2\Lambda^*-2\Lambda(t+1)||^2_{G_2} 
	+\frac{1}{2}||\hat{f}(t)-\hat{f}^*||^2_{\eta(D+A)}-\frac{1}{2}||\hat{f}(t+1)-\hat{f}^*||^2_{\eta(D+A)}
	\end{eqnarray}	
	\hrulefill
\end{figure*}
\section{Proof of Theorem \ref{thmC1}}\label{App1}
By convexity of $O(f_i,D_i)$, $(f_i^1-{f}^2_i)^T(\nabla O(f_i^1,D_i)-\nabla O({f}^2_i,D_i)) \geq 0$ holds $\forall$ $f_i^1, {f}_i^2$. Let $\langle\cdot,\cdot\rangle_F$ be frobenius inner product of two matrices, there is: $$\langle \hat{f}(t+1)-\hat{f}^*,\nabla \hat{O}(\hat{f}(t+1),D_{all})-\nabla \hat{O}(\hat{f}^*,D_{all})\rangle_F \geq 0$$ According to \eqref{eq:c_2}\eqref{eq:c_4} and \eqref{eq:c_3}, substitute $\nabla\hat{O}(\hat{f}(t+1),D_{all})-\nabla \hat{O}(\hat{f}^*,D_{all})$ and add an extra term $\eta (D+A)\tilde{D}^{-1}(\nabla \hat{O}(\hat{f}^*,D_{all})+2\Lambda^*)=\textbf{0}_{N\times d}$, implies Eqn. \eqref{eq:thmC1}.


To simplify the notation, for a matrix $X$, let $||X||^2_{J} = \langle X, JX \rangle_F$ and $(X)^+$ be the pseudo inverse of $X$.  Define:
\begin{eqnarray}
	G_1 = \eta (D+A) \tilde{D}^{-1} \eta(D-A) ~;\nonumber \\
	G_2 = \frac{1}{\eta}(D-A)^{+}(I+\eta (D+A) \tilde{D}^{-1})~.\nonumber 
\end{eqnarray} 
 Use \eqref{eq:c_3}\eqref{eq:c_5} and the fact that $\langle A, JB \rangle_F=\langle J^TA, B \rangle_F$, Eqn. \eqref{eq:thmC2}\eqref{eq:thmC3}\eqref{eq:thmC4} hold.
Let $\sqrt{X}$ denote the square root of a symmetric positive semi-definite (PSD) matrix $X$ that is also symmetric PSD. 
Eqn. \eqref{eq:thmC5} holds, 
where the inequality uses the facts that $O(f_i,D_i)$ is convex for all $i$ and that the matrix $\eta (D+A) \tilde{D}^{-1}$ is positive definite.

According to  \eqref{eq:assume1} in Assumption 3, 
 define the matrix $D_M = \textbf{diag}([M_1^2;M_2^2;\cdots;M_N^2])\in \mathbb{R}^{N \times N}$, it implies
$||\nabla \hat{O}(\hat{f}^1,D_{all}) -\nabla \hat{O}(\hat{f}^2,D_{all})||^2_F \leq \langle \hat{f}^1 - \hat{f}^2,D_M(\hat{f}^1 - \hat{f}^2) \rangle_F$. 
Since $\langle A, B \rangle_F\leq \frac{1}{L}||A||^2_F + \frac{L}{4}||B||_F^2$ holds for any $L>0$, there is:
\begin{eqnarray}\label{eq:thmC7}
\eqref{eq:thmC5} \leq \frac{1}{L}||\eta (D+A) \sqrt{\tilde{D}^{-1}}(\hat{f}(t)-\hat{f}(t+1))||^2_F\nonumber \\ + \frac{L}{4}||\sqrt{\tilde{D}^{-1}}(\nabla \hat{O}(\hat{f}(t),D_{all}) -\nabla \hat{O}(\hat{f}^*,D_{all}))||_F^2\nonumber \\
\leq \frac{1}{L}||(\hat{f}(t)-\hat{f}(t+1))||^2_{\eta (D+A) \tilde{D}^{-1}\eta (D+A)} \nonumber \\+ \frac{L\sigma_{\max}(\tilde{D}^{-1})}{4} ||\nabla \hat{O}(\hat{f}(t),D_{all}) -\nabla \hat{O}(\hat{f}^*,D_{all})||_F^2\nonumber \\
= \frac{1}{L}||(\hat{f}(t)-\hat{f}(t+1))||^2_{\eta (D+A) \tilde{D}^{-1}\eta (D+A)}\nonumber \\ + \frac{L}{4\sigma_{\min}(\tilde{D})} ||\hat{f}^*-\hat{f}(t)||^2_{D_M}
\end{eqnarray}
where $\sigma_{\max}(\cdot)$, $\sigma_{\min}(\cdot)$ denote the largest and smallest singular value of a matrix respectively. Since for any $\mu > 1$ and any matrices $C_1$, $C_2$, $J$ with the same dimensions, there is $||C_1+C_2||^2_J \leq \mu||C_1||^2_J+ \frac{\mu}{\mu - 1}||C_2||^2_J$. which implies:
\begin{eqnarray}
||\hat{f}^*-\hat{f}(t)||^2_{D_M} = ||\hat{f}^*-\hat{f}(t+1)+\hat{f}(t+1)-\hat{f}(t)||^2_{D_M}
\nonumber \\\leq \mu||\hat{f}^*-\hat{f}(t+1)||^2_{D_M}+\frac{\mu}{\mu-1}||\hat{f}(t+1)-\hat{f}(t)||^2_{D_M}\nonumber
\end{eqnarray}
Plug into \eqref{eq:thmC7} and use \eqref{eq:c_3}\eqref{eq:c_5} gives Eqn. \eqref{eq:thmC8}.

Combine \eqref{eq:thmC2}\eqref{eq:thmC3}\eqref{eq:thmC4}\eqref{eq:thmC8}, \eqref{eq:thmC1} becomes Eqn. \eqref{eq:thmC9}.
Suppose the following two conditions hold for some constants $L>0$ and $\mu>1$:
\begin{eqnarray}
(I+\eta (D+A) \tilde{D}^{-1}) \succ \frac{L\mu}{2\sigma_{\min}(\tilde{D})}\frac{1}{\eta}D_M(D-A)^{+} ~;\label{eq:thmC14}\\
\eta(D+A)\succ \eta (D+A) \tilde{D}^{-1} \eta(D-A)\nonumber \\ +\frac{2}{L}\eta (D+A) \tilde{D}^{-1}\eta (D+A)+\frac{L\mu}{2\sigma_{\min}(\tilde{D})(\mu-1)}D_M ~.\label{eq:thmC15}
\end{eqnarray}

Substitute $G_1 = \eta (D+A) \tilde{D}^{-1} \eta(D-A)$ and $G_2 = \frac{1}{\eta}(D-A)^{+}(I+\eta (D+A) \tilde{D}^{-1})$, define $R_1$ and $R_2$ below gives:
\begin{eqnarray}
R_1 = \eta(D+A)-G_1-\frac{L\mu}{2\sigma_{\min}(\tilde{D})(\mu-1)}D_M\nonumber \\-\frac{2}{L}\eta (D+A) \tilde{D}^{-1}\eta (D+A)\succ \textbf{0}_{N\times N} ~;\label{eq:thmC16}\\
R_2 = G_2 - \frac{L\mu}{2\sigma_{\min}(\tilde{D})}(\frac{1}{\eta}(D-A)^{+})^2D_M\succ \textbf{0}_{N\times N}~.\label{eq:thmC17}
\end{eqnarray}
Eqn. \eqref{eq:thmC9} becomes:
\begin{eqnarray}\label{eq:thmC18}
\frac{1}{2}||\hat{f}(t)-\hat{f}(t+1)||^2_{R_1}+  \frac{1}{2}||2\Lambda(t+1)-2\Lambda(t)||^2_{R_2}
\nonumber \\ \leq \frac{1}{2}||\hat{f}(t+1)-\hat{f}^*||^2_{G_1} - \frac{1}{2}||\hat{f}(t)-\hat{f}^*||^2_{G_1} \nonumber \\+\frac{1}{2}||2\Lambda^*-2\Lambda(t)||^2_{G_2}-\frac{1}{2}||2\Lambda^*-2\Lambda(t+1)||^2_{G_2} 
\nonumber \\+\frac{1}{2}||\hat{f}(t)-\hat{f}^*||^2_{\eta(D+A)}-\frac{1}{2}||\hat{f}(t+1)-\hat{f}^*||^2_{\eta(D+A)}
\end{eqnarray}
Sum up \eqref{eq:thmC18} over $t$ from $0$ to $+\infty$ leads to:
\begin{eqnarray}\label{eq:thmC19}
\sum_{t=0}^{\infty}\{||\hat{f}(t)-\hat{f}(t+1)||^2_{R_1}+  ||2\Lambda(t+1)-2\Lambda(t)||^2_{R_2}\} \nonumber \\ \leq ||\hat{f}(0)-\hat{f}^*||^2_{\eta(D+A)}-||\hat{f}(+\infty)-\hat{f}^*||^2_{\eta(D+A)}\nonumber \\ +||\hat{f}(\infty)-\hat{f}^*||^2_{G_1} - ||\hat{f}(0)-\hat{f}^*||^2_{G_1} \nonumber \\+||2\Lambda^*-2\Lambda(0)||^2_{G_2}-||2\Lambda^*-2\Lambda(\infty)||^2_{G_2}
\end{eqnarray}
The RHS of \eqref{eq:thmC19} is finite, implies that $\lim_{t\rightarrow\infty}\{||\hat{f}(t)-\hat{f}(t+1)||^2_{R_1}+  ||2\Lambda(t+1)-2\Lambda(t)||^2_{R_2}\} = 0$. Since $R_1$, $R_2$ are not unique, by \eqref{eq:thmC16}\eqref{eq:thmC17}, it requires $\lim_{t\rightarrow\infty}||\hat{f}(t)-\hat{f}(t+1)||^2_{R_1}=0$ and $\lim_{t\rightarrow\infty}||2\Lambda(t+1)-2\Lambda(t)||^2_{R_2} = 0$ should hold for all possible $R_1$, $R_2$. Therefore, $\lim_{t\rightarrow\infty}(\hat{f}(t)-\hat{f}(t+1))=\textbf{0}_{N\times d}$ and $\lim_{t\rightarrow\infty}(2\Lambda(t+1)-2\Lambda(t)) = \textbf{0}_{N\times d}$ should hold. $(\hat{f}(t),\Lambda(t))$ converges to the stationary point $(\hat{f}^s,\Lambda^s)$. Now show that the stationary point $(\hat{f}^s,\Lambda^s)$ is the optimal point $(\hat{f}^*,\Lambda^*)$.

Take the limit of both sides of \eqref{eq:c_2}\eqref{eq:c_3} yield: 
\begin{eqnarray}
(I+\eta (D+A) \tilde{D}^{-1})(\nabla \hat{O}(\hat{f}^s,D_{all}) +2\Lambda^s)=\textbf{0}_{N\times d}~;\label{eq:thmC24}\\
(D-A)\hat{f}^s = \textbf{0}_{N\times d} \label{eq:thmC25}~.
\end{eqnarray}
Since $I+\eta (D+A) \tilde{D}^{-1}\succ \textbf{0}_{N\times N}$, to satisfy \eqref{eq:thmC24}, $\nabla \hat{O}(\hat{f}^s,D_{all}) +2\Lambda^s=\textbf{0}_{N\times d}$ must hold.

Compare with \eqref{eq:c_4}\eqref{eq:c_5} in Lemma \ref{lemmaP1} and observe that $(\hat{f}^s,\Lambda^s)$ satisfies the optimality condition and is thus the optimal point. Therefore, $(\hat{f}(t),\Lambda(t))$ converges to $(\hat{f}^*,\Lambda^*)$.
\section{Proof of Lemma \ref{lemmaP1}}\label{App_2}
Consider the Private R-ADMM up to $2k$-th iteration. In $(2k-1)$-th iteration, the primal variable is updated via \eqref{eq:P_modify_2}, By KKT condition:
\begin{eqnarray}\label{eq:lemma1}
 \nabla O(f_i(2k-1),D_i) + \epsilon_i(2k-1)=-2\lambda_i(2k-2)\nonumber \\  - \eta \sum_{j \in \mathscr{V}_i}(2f_i(2k-1)-f_i(2k-2)-f_j(2k-2))
\end{eqnarray}
Given $\{f_i(t)\}_{i=1}^N$ for $t\leq 2k-2$, $\{\lambda_i(2k-2)\}_{i=1}^N$ are also given. RHS of \eqref{eq:lemma1} can be calculated completely after releasing $\{f_i(k-1)\}_{i=1}^N$, i.e., the information of $\nabla O(f_i(2k-1),D_i) + \epsilon_i(2k-1)$ is completely released during $(2k-1)$-th iteration. Suppose the Private R-AMDD satisfies $\beta_{2k-1}$-differential privacy during $(2k-1)$ iterations, then in $(2k)$-th iterations, by \eqref{eq:P_modify_3}:
\begin{eqnarray}\label{eq:lemma2}
f_i(2k)=f_i(2k-1) - \frac{1}{2\eta V_i+\gamma}\{\nabla O(f_i(2k-1),D_i) \nonumber \\+\epsilon_i(2k-1) +2\lambda_i(2k-1)\nonumber\\+\eta\sum_{j\in \mathscr{V}_i}(f_i(2k-1)-f_j(2k-1))\}\nonumber
\end{eqnarray}
which is a deterministic mapping taking the outputs from $(2k-1)$-th iteration as input. Because the differential privacy is immune to post-processing \cite{dwork2014algorithmic}, releasing $\{f_i(2k)\}_{i=1}^N$ doesn't increase the privacy loss, i.e., the total privacy loss up to $(2k)$-th iteration can still be bounded by $\beta_{2k-1}$.  

\section{Proof of Theorem \ref{thmP}}\label{App_3}
Use the uppercase letters $X$ and lowercase letters $x$ to denote random variables and the corresponding realizations, and use $\mathscr{F}_{X}(\cdot)$ to denote its probability distribution.


For two neighboring datasets $D_{all}$ and $\hat{D}_{all}$ of the network, by Lemma \ref{lemmaP1}, the total privacy loss is only contributed by odd iterations. Thus, the ratio of joint probabilities (privacy loss) is given by:
\begin{eqnarray}\label{thmP1}
\frac{\mathscr{F}_{F(0:2K)}(\{f(r)\}_{r=0}^2K|D_{all})}{\mathscr{F}_{F(0:2K)}(\{f(r)\}_{r=0}^2K|\hat{D}_{all})} = \frac{\mathscr{F}_{F(0)}(f(0)|D_{all})}{\mathscr{F}_{F(0)}(f(0)|\hat{D}_{all})}  \nonumber\\ \cdot \prod^K_{k=1}\frac{\mathscr{F}_{F(2k-1)}(f(2k-1)|\{f(r)\}_{r=0}^{2k-2},D_{all})}{\mathscr{F}_{F(2k-1)}(f(2k-1)|\{f(r)\}_{r=0}^{2t-2},\hat{D}_{all})}
\end{eqnarray}
Since $f_i(0)$ is randomly selected for all $i$, which is independent of dataset, there is $\mathscr{F}_{F(0)}(f(0)|D_{all}) = \mathscr{F}_{F(0)}(f(0)|\hat{D}_{all})$. First only consider $(2k-1)$-th iteration, since the primal variable is updated according to \eqref{eq:P_modify_2}, by KKT optimality condition:
\begin{eqnarray}\label{thmP2}
\epsilon_i(2k-1)=-\nabla O(f_i(2k-1),D_i) -2\lambda_i(2k-2)\nonumber \\  - \eta \sum_{j \in \mathscr{V}_i}(2f_i(2k-1)-f_i(2k-2)-f_j(2k-2))
\end{eqnarray}
Given $\{f(r)\}_{r=0}^{2k-2}$, $F_i(2k-1)$ and $E_i(2k-1)$ will be bijective $\forall i$, there is:
\begin{eqnarray}\label{thmP3}
\frac{\mathscr{F}_{F(2k-1)}(f(2k-1)|\{f(r)\}_{r=0}^{2k-2},D_{all})}{\mathscr{F}_{F(2k-1)}(f(2k-1)|\{f(r)\}_{r=0}^{2k-2},\hat{D}_{all})}
\nonumber \\= \prod^{N}_{v=1}\frac{\mathscr{F}_{F_v(2k-1)}(f_v(2k-1)|\{f_v(r)\}_{r=0}^{2k-2},D_v)}{\mathscr{F}_{F_v(2k-1)}(f_v(2k-1)|\{f_v(r)\}_{r=0}^{2k-2},\hat{D}_v)}
\nonumber\\= \frac{\mathscr{F}_{F_i(2k-1)}(f_i(2k-1)|\{f_i(r)\}_{r=0}^{2k-2},D_i)}{\mathscr{F}_{F_i(2k-1)}(f_i(2k-1)|\{f_i(r)\}_{r=0}^{2k-2},\hat{D}_i)}
\end{eqnarray}
Since two neighboring datasets $D_{all}$ and $\hat{D}_{all}$ only have at most one data point that is different, the second equality holds is because of the fact that this different data point could only be possessed by one node, say node $i$. Then there is $D_j = \hat{D}_j$ for $j \neq i$.

Given $\{f(r)\}_{r=0}^{2k-2}$, let $g_{k}(\cdot,D_i): \mathbb{R}^d \rightarrow \mathbb{R}^d $ denote the one-to-one mapping from $E_i(2k-1)$ to $F_i(2k-1)$ using dataset $D_i$. 
By Jacobian transformation, there is 
$\mathscr{F}_{F_i(2k-1)}(f_i(2k-1)|D_i) = \mathscr{F}_{E_i(2k-1)}(g^{-1}_{k}(f_i(2k-1),D_i))\cdot|\det(\textbf{J}(g^{-1}_{k}(f_i(2k-1),D_i)))|$
, where $g^{-1}_{k}(f_i(2k-1),D_i)$ is the mapping from $F_i(2k-1)$ to $E_i(2k-1)$ using data $D_i$ as shown in \eqref{thmP2} and $\textbf{J}(g^{-1}_{k}(f_i(2k-1),D_i))$ is the Jacobian matrix of it. Then 
\eqref{thmP1} yields:
\begin{eqnarray}\label{thmP5}
\frac{\mathscr{F}_{F(0:2K)}(\{f(r)\}_{r=0}^{2K}|D_{all})}{\mathscr{F}_{F(0:2K)}(\{f(r)\}_{r=0}^{2K}|\hat{D}_{all})}\nonumber \\= \prod^{K}_{k=1}\frac{\mathscr{F}_{E_i(2k-1)}(g^{-1}_{k}(f_i(2k-1),D_i))}{\mathscr{F}_{E_i(2k-1)}(g^{-1}_{k}(f_i(2k-1),\hat{D}_i))}
\nonumber\\ \cdot \prod^{K}_{k=1} \frac{|\det(\textbf{J}(g^{-1}_{k}(f_i(2k-1),D_i)))|}{|\det(\textbf{J}(g^{-1}_{k}(f_i(2k-1),\hat{D}_i)))|}
\end{eqnarray}
Consider the first part, $E_i(2k-1) \sim \exp\{-\alpha_i(k)||\epsilon||\}$, let $\hat{\epsilon}_i(2k-1) = g^{-1}_{k}(f_i(2k-1),\hat{D}_i)$ and ${\epsilon}_i(2k-1) = g^{-1}_{k}(f_i(2k-1),D_i)$
\begin{eqnarray}\label{thmP6}
\prod^{K}_{k=1}\frac{\mathscr{F}_{E_i(2k-1)}(g^{-1}_{k}(f_i(2k-1),D_i))}{\mathscr{F}_{E_i(2k-1)}(g^{-1}_{k}(f_i(2k-1),\hat{D}_i))}
\nonumber \\= \prod^{K}_{k=1} \exp(\alpha_i(k)(||\hat{\epsilon}_i(2k-1)|| - ||\epsilon_i(2k-1)||))
\nonumber \\\leq \exp(\sum^{K}_{k=1}\alpha_i(k)||\hat{\epsilon}_i(2k-1) - \epsilon_i(2k-1)||)
\end{eqnarray}
Without loss of generality, let $D_i$ and $\hat{D}_i$ be only different in the first data point, say $(x_i^1,y_i^1)$ and $(\hat{x}_i^1,\hat{y}_i^1)$ respectively. 
By \eqref{thmP2}, Assumptions 4 and the facts that $||x_i^n||_2 \leq 1$ (pre-normalization), $y_i^n \in \{+1,-1\}$.
\begin{eqnarray}\label{thmP7}
||\hat{\epsilon}_i(2k-1) - \epsilon_i(2k-1)|| \nonumber \\=||\nabla O(f_i(2k-1),\hat{D}_i)-\nabla O(f_i(2k-1),D_i)||
\leq \frac{2C}{B_i}
\end{eqnarray}

\eqref{thmP6} can be bounded:
\begin{equation}\label{thmP8}
\prod^{K}_{k=1}\frac{\mathscr{F}_{E_i(2k-1)}(g^{-1}_{k}(f_i(2k-1),D_i))}{\mathscr{F}_{E_i(2k-1)}(g^{-1}_{k}(f_i(2k-1),\hat{D}_i))}
 \leq \exp(\sum^{K}_{k=1}\frac{2C\alpha_i(k)}{B_i})
\end{equation}

Consider the second part, the Jacobian matrix $\textbf{J}(g^{-1}_{k}(f_i(2k-1),D_i))$ is:
\begin{eqnarray}\label{thmP10}
\textbf{J}(g^{-1}_{k}(f_i(2k-1),D_i))\nonumber \\ = -\frac{C}{B_i}\sum_{n=1}^{B_i}\mathscr{L}''(y_i^n f_i(2k-1)^T x_i^n)x_i^n(x_i^n)^T
\nonumber \\-\frac{\rho}{N}\nabla^2 R(f_i(2k-1)) - 2\eta V_i\textbf{I}_d\nonumber 
\end{eqnarray}

Define
\begin{eqnarray}
G(k) = \frac{C}{B_i}(\mathscr{L}''(\hat{y}_i^1 f_i(2k-1)^T \hat{x}_i^1)\hat{x}_i^1(\hat{x}_i^1)^T \nonumber \\- \mathscr{L}''(y_i^1 f_i(2k-1)^T x_i^1)x_i^1(x_i^1)^T)\nonumber ~;\\
H(k) = -\textbf{J}(g^{-1}_{k}(f_i(2k-1),D_i))~.\nonumber
\end{eqnarray}

There is:
\begin{eqnarray}\label{thmP11}
 \frac{|\det(\textbf{J}(g^{-1}_{k}(f_i(2k-1),D_i)))|}{|\det(\textbf{J}(g^{-1}_{k}(f_i(2k-1),\hat{D}_i)))|}
\nonumber \\= \frac{|\det(H(k))|}{|\det(H(k)+G(k))|}
 = \frac{1}{|\det(I + H(k)^{-1}G(k))|}
 \nonumber \\ = \frac{1}{|\prod_{j=1}^r(1+\lambda_j(H(k)^{-1}G(k)))|}
\end{eqnarray}

where $\lambda_j(H(k)^{-1}G(k))$ denotes the $j$-th largest eigenvalue of $H(k)^{-1}G(k)$. Since $G(k)$ has rank at most 2, $H(k)^{-1}G(k)$ also has rank at most 2. By Assumptions 4 and 5, the eigenvalue of $H(k)$ and $G(k)$ satisfy
\begin{eqnarray}\label{thmP12}
\lambda_j(H(k)) \geq \frac{\rho}{N} + 2\eta V_i > 0 ~;\nonumber\\
-\frac{Cc_1}{B_i} \leq \lambda_j(G(k)) \leq \frac{Cc_1}{B_i}~.\nonumber
\end{eqnarray}
Implies
\begin{eqnarray}\label{thmP14}
-\frac{c_1}{\frac{B_i}{C}(\frac{\rho}{N}+2\eta V_i)}\leq \lambda_{j}(H(k)^{-1}G(k))
\leq \frac{c_1}{\frac{B_i}{C}(\frac{\rho}{N}+2\eta V_i)}~.\nonumber 
\end{eqnarray}

Since $2c_1 < \frac{B_i}{C}(\frac{\rho}{N}+2\eta V_i)$, there is
\begin{equation*}
-\frac{1}{2}\leq \lambda_{j}(H(k)^{-1}G(k))
\leq \frac{1}{2}.
\end{equation*}

Since $\lambda_{\min}(H(k)^{-1}G(k)) > -1$, there is
\begin{eqnarray}
\frac{1}{|1+\lambda_{\max}(H(k)^{-1}G(k))|^2}  \leq \frac{1}{|\text{det}(I+H(k)^{-1}G(k))|}  \nonumber \\\leq \frac{1}{|1+\lambda_{\min}(H(k)^{-1}G(k))|^2}~. \nonumber
\end{eqnarray}
Therefore, 
\begin{eqnarray}\label{thmP15}
 \prod^{K}_{k=1}\frac{|\det(\textbf{J}(g^{-1}_{k}(f_i(2k-1),D_i)))|}{|\det(\textbf{J}(g^{-1}_{k}(f_i(2k-1),\hat{D}_i)))|}
\nonumber \\ \leq \prod^{K}_{k=1}\frac{1}{|1-\frac{c_1}{\frac{B_i}{C}(\frac{\rho}{N}+2\eta V_i)}|^2}
\nonumber\\ = \exp(-\sum_{k=1}^{K}2\ln(1-\frac{c_1}{\frac{B_i}{C}(\frac{\rho}{N}+2\eta V_i)}))~.
\end{eqnarray}

Since for any real number $x \in [0,0.5]$, $-\ln(1-x)<1.4x$. \eqref{thmP15} can be bounded with a simper expression:
\begin{eqnarray}\label{thmP16}
 \prod^{K}_{k=1}\frac{|\det(\textbf{J}(g^{-1}_{k}(f_i(2k-1),D_i)))|}{|\det(\textbf{J}(g^{-1}_{k}(f_i(2k-1),\hat{D}_i)))|}
\nonumber \\ \leq \exp(\sum_{k=1}^{K}\frac{2.8c_1}{\frac{B_i}{C}(\frac{\rho}{N}+2\eta V_i)})~. 
\end{eqnarray}

Combine \eqref{thmP8}\eqref{thmP16}, \eqref{thmP5} can be bounded:
\begin{eqnarray}
\frac{\mathscr{F}_{F(0:2K)}(\{f(r)\}_{r=0}^{2K}|D_{all})}{\mathscr{F}_{F(0:2K)}(\{f(r)\}_{r=0}^{2K}|\hat{D}_{all})} \nonumber\\ \leq\exp(\sum_{k=1}^{K}\frac{2C}{B_i}(\frac{1.4c_1}{(\frac{\rho}{N}+2\eta V_i)} + \alpha_i(k)))~.
\end{eqnarray}

Therefore, the total privacy loss during $T$ iterations can be bounded by any $\beta$:
\begin{equation*}
 \beta \geq \underset{i \in \mathscr{N}}{\max}\{\sum_{k=1}^{K}\frac{2C}{B_i}(\frac{1.4c_1}{(\frac{\rho}{N}+2\eta V_i)} + \alpha_i(k))\}~.
\end{equation*}

\vfill
\bibliographystyle{IEEEtran}
\bibliography{allerton2018_xueru}

\newpage


\end{document}